\newtheorem{thm}{Theorem}
\newtheorem{lem}[thm]{Lemma}
\newtheorem{corl}[thm]{Corollary}
\newenvironment{proof}{{\bf Proof. } } 
\def\qed{\hfill $\Box$}
\title{Search by a Metamorphic Robotic System \\ 
in a Finite 2D Square Grid} 
\author{Keisuke Doi\thanks{Graduate School of Information Science and Electrical Engineering, Kyushu University, Japan. } 
\and 
Yukiko Yamauchi\thanks{Corresponding author. Faculty of Information Science and Electrical Engineering, Kyushu University, Japan.  e-mail: \texttt{yamauchi@inf.kyushu-u.ac.jp}} 
\and 
Shuji Kijima \thanks{Faculty of Information Science and Electrical Engineering, Kyushu University, Japan. }
\and 
Masafumi Yamashita\thanks{Faculty of Information Science and Electrical Engineering, Kyushu University, Japan. }}
\begin{document}
\date{}
\maketitle

\begin{abstract}
 We consider search in an unknown finite 2D square grid by
 a metamorphic robotic system consisting of anonymous 
 memory-less modules.
 Each module autonomously moves with executing 
 a common distributed algorithm and 
 the modules collectively form a robotic system 
 by keeping connectivity. 
 The number of shapes of the metamorphic robotic 
 system grows as the number of modules increases, and 
 a shape of the system serves as its memory and show 
 its functionality. 
 We present the minimum number of modules for search in 
 a finite 2D square grid. 
 We demonstrate that if the modules agree on the directions, 
 i.e., they are equipped with the global compass, 
 three modules are necessary and sufficient for
 search from an arbitrary initial shape, 
 otherwise five modules are necessary and sufficient 
 for search from limited initial shapes. 

\noindent{\bf Keyword}: 
 Distributed system ,
 mobile computing entities, 
 metamorphic robotic system, 
 search, and 
 exploration.

\end{abstract}

\section{Introduction}

As autonomous mobile robots, drones, and vehicles become widely spread 
and nanoscale manufacturing enables synthesis of 
target DNA, RNA, and polymers, 
computational power of a collection of mobile computing entities
with very weak capabilities attracts much attention 
in distributed computing theory. 
The investigation also reveals the minimum capabilities for the 
mobile computing entities to accomplish a given task and 
provides design guidelines for hardware robots. 
These results are also expected to give a clue to understanding
complex behavior of natural systems. 
Each mobile computing entity is called robots, agents, or particles, and 
it is often assumed that it is \emph{anonymous} and 
\emph{oblivious} (memory-less), and
does not have any communication capability
or any access to the global coordinate system. 
Many distributed computing models have 
proposed for modular robots~\cite{DSY04a}, 
mobile robots~\cite{SY99}, 
mobile ad-hoc networks~\cite{AADFP06}, 
biological systems~\cite{DGRSS16}, and 
chemical reactions~\cite{MS17}. 
There are a variety of indicator tasks such as 
gathering~\cite{CFPS12,SY99,YS10}, leader election~\cite{DGRSS16,DFSVY17}, 
decomposition~\cite{LYKY18}, 
shape formation~\cite{DGRSS16,DFSVY17,DSY04a,FYOKY15,MSS19,MS17,SY99,YS10,YUKY17,YY13,YY14}, and 
computing functions~\cite{AADFP06,MS17}. 

Suzuki and Yamashita investigated the \emph{pattern formation problem} 
by anonymous \emph{mobile robots} moving in the continuous 2D space~\cite{SY99}. 
Each robot observes the positions of other robots, 
computes its next position with a common deterministic algorithm, 
and moves to the next position. 
The pattern formation problem requires that the mobile robots form 
a specified target pattern (shape) from a given initial configuration. 
They pointed out that the 
pattern formation problem is essentially related to
the \emph{agreement problem}. 
For example, the simplest agreement problem is the point formation problem, 
also called the gathering problem, 
that requires the robot to agree on a single point.
Additionally, once the robots agree on a common coordinate system,
they can form any target pattern. 
However, the anonymous robots cannot break the symmetry of their initial 
configuration, and the set of formable patterns is characterized 
by the rotational symmetry of the initial configuration. 

Derakhshandeh et al. presented a shape formation algorithm
for the \emph{amoebot model} in the 2D triangular grid~\cite{DGRSS16}. 
The amoebot model consists of anonymous programmable particles 
each of which is equipped with constant-size memory and communication
capability with neighboring particles. 
Each vertex of a triangular grid is occupied by at most one particle 
at each time step, and 
each particle moves by repeating an extension and a contraction.
Their algorithm is based on a randomized \emph{leader election}, 
that allows formation of an arbitrary shape consisting of triangles. 

Dumitrescu et al. considered shape formation 
in the \emph{metamorphic robotic system model} 
that consists of oblivious modules moving 
in the 2D square grid~\cite{DSY04a,DSY04b}.
Each module autonomously perform two types of local movements,
called a rotation and a sliding; however they are required to 
maintain connectivity. 
They presented a canonical shape,
to which any convex shape can be transformed. 
The reversibility of movements guarantees
transformation between any pair of convex shapes via the canonical shape. 

The goal of these existing studies is the structure of shapes 
of a distributed system consisting of mobile computing entities. 
Reachability among shapes decomposes the system's configuration
space into subspaces, 
that indicate the degree of agreement and coordination, 
in other words, computational power of the system. 
When we take a closer look at existing shape formation
algorithms, we find that intermediate shapes are used 
to guarantee the progress of distributed coordination. 
In other words, a shape of the system serves as its global memory 
while each computing entity is memory-less or
equipped with constant-size memory. 

In this paper, we investigate the functionality of 
shapes of a distributed system. 
We focus on the \emph{search problem} 
by the metamorphic robotic system. 
The problem requires the metamorphic robotic system to find a target 
put in one cell of a given field, which is a finite rectangular 
subspace of the 2D square grid. 
Each module is not given any apriori knowledge of the field or 
the target cell and 
the metamorphic robotic system may start from an arbitrary initial 
configuration. 
To find the target, the metamorphic robotic system needs to remember, 
for example, its starting point, its current moving direction, and the 
progress of search. 
Clearly, as the number of modules increases,
the number of possible shapes increases and 
the system can store more information. 
We investigate the minimum number of modules for search in an unknown 
2D square grid and the effect of the \emph{global compass} that 
allows the modules to agree on north, south, east, and west. 

\noindent{\bf Our results.~} 
Although shape formation~\cite{DSY04a,MSS19}
and locomotion~\cite{CYKY14,DSY04b} 
by the metamorphic robotic system have been discussed, 
to the best of our knowledge, this is the first time 
the search problem is discussed for a metamorphic robotic system. 
We first demonstrate when the modules are equipped with 
the global compass,
three modules are necessary and sufficient for search 
from an arbitrary initial configuration. 
We show impossibility for less than three modules and 
present a distributed search algorithm for three modules. 
The proposed algorithm is based on the \emph{exploration} of a given field, 
i.e., the metamorphic robotic system visits each cell of a given field. 
The advantage of the proposed algorithm is 
that the metamorphic robotic system can start from 
an arbitrary initial configuration, i.e., 
the algorithm guarantees \emph{self-stabilization}. 
A self-stabilizing distributed algorithm guarantees 
that the system autonomously converges to a legitimate execution 
from an arbitrary initial system configuration~\cite{D74}. 
Self-stabilization promises robustness against a finite number of 
transient faults 
and autonomous adaptability to intermittent environment changes. 
Note that search and exploration 
by a single metamorphic robotic system
is different from 
search by ants~\cite{ELSUW15},
mobile agents~\cite{DFKNS07}, or 
mobile robots~\cite{BMPT11,DLPRT12,FIPS13} 
because the metamorphic robotic system cannot separate 
into several small fragments.

We next demonstrate that when the modules are 
not equipped with the global compass,
five modules are necessary and sufficient; 
however there are initial shapes from which
the metamorphic robotic system cannot perform search 
due to symmetric deadlocks. 
We present a distributed algorithm for five modules from 
any initial configurations without deadlocks. 
We also demonstrate that less than five modules 
cannot perform search in a given field 
even when we exclude initial deadlocks. 

We finally consider self-stabilizing search 
when the modules are not equipped with the global compass. 
As a primitive, we consider \emph{locomotion} 
that requires a metamorphic robotic system to keep on moving 
to one direction. 
The metamorphic robotic system must break their initial symmetry 
so that they agree on a moving direction. 
We demonstrate that seven modules can perform locomotion 
from any initial configuration. 
We thus conclude seven modules are necessary and sufficient 
for self-stabilizing search. 

\noindent{\bf Related works.~} 
Most existing papers on the computational power of 
distributed system consisting of mobile computing entities 
focus on \emph{shape formation}. 
One of the most active areas is pattern formation by 
autonomous mobile robots and the effect of various capabilities such as 
obliviousness~\cite{SY99},
asynchrony~\cite{FYOKY15,YS10}, 
limited visibility~\cite{YY13},
and randomness~\cite{YY14} 
has been discussed. 
These papers showed that 
rotational symmetry of an initial configuration determines formable shapes, 
i.e., obliviousness and asynchrony generally have no effect. 
Randomness allows probabilistic symmetry breaking and
realizes universal pattern formation. 
These results consider mobile robots in the 2D space, and 
Yamauchi et al. extended them to mobile robots in the 3D space, 
where rotational symmetry consists of 
five types of rotation groups~\cite{YUKY17}.

Shape formation in the amoebot model is investigated 
for shapes consisting of triangles~\cite{DGRSS16} 
and arbitrary shapes~\cite{DFSVY17}. 
Di Luna et al. considered the limit of deterministic
leader election and characterized formable shapes
by the symmetry of an initial configuration~\cite{DFSVY17}. 
Derakhshandeh et al. presented a universal 
shape formation algorithm 
based on a randomized leader election~\cite{DGRSS16}.  

Shape formation for the metamorphic robotic system 
is investigated
in a distributed setting and in a centralized setting. 
Dumitrescu et al. considered distributed transformability of an
initial (horizontally) convex shape to a line (also called a chain)
shape with the global compass~\cite{DSY04a}. 
Dumitrescu et al. considered \emph{locomotion} of the metamorphic
robotic system
and presented shapes that realizes fastest locomotion~\cite{DSY04b}. 
While these two papers assume unlimited visibility, 
Chen et al. considered locomotion with limited visibility~\cite{CYKY14}. 
Michail et al. considered decidability of centralized transformation~\cite{MSS19}. 
When the movement is limited to rotations,
there are pairs of shapes that are not transformable to each other. 
They showed that deciding whether an initial connected shape can be transformed 
to a target connected shape only by rotations is in $\mathrm{P}$ if 
disconnection during transformation is allowed, 
otherwise the decision problem is in $\mathrm{PSPACE}$. 
Then they showed combination of rotations and slidings allows 
transformation of any initial connected shape to a target connected shape. 

Michail and Spirakis proposed the \emph{network constructor model} 
that consists of finite-state agents under passive movement~\cite{MS17}.
The interaction model is based on the
\emph{population protocol model}~\cite{AADFP06},
while the agents can construct an edge when they interact. 
They considered distributed transformation to a spanning line 
and showed that $n$ agents can compute any symmetric predicate
in $\mathrm{SPACE}(n^2)$. 

All these papers consider reachability and classification of shapes. 
Little is investigated for the functionality of shapes.
Das et al. investigated the formation of a sequence of patterns,
that also serves as finite memory formed by oblivious mobile
robots~\cite{DFSY15}. 
Simulating a Turing machine by a line shape of computing entities 
has been separately discussed for
the metamorphic robotic system model~\cite{DSY04a}, 
the network constructor model~\cite{MS17}, and
the amoebot model~\cite{DFSVY17}.
Di Luna et al. showed a constant number of oblivious mobile 
robots can
simulate a robot with memory in a continuous space of an 
arbitrary dimension~\cite{DFSV18}. 
In this paper, we focus on the fact that
geometric configuration of the metamorphic robotic system 
functions as memory and processor, 
and
we investigate how a small number of oblivious modules 
collectively perform search in an unknown rectangular field.

\noindent{\bf Organization.~} 
We introduce the metamorphic robotic 
system model and the search problem in Section~\ref{sec:prel}. 
In Section~\ref{sec:global} 
we consider modules equipped with the global compass 
and present the minimum number of modules for 
search from an arbitrary initial configuration. 
In Section~\ref{sec:local} 
we consider modules not equipped with the global compass. 
We first illustrate initial deadlock states (shapes) 
caused by symmetric positions of modules. 
We then present the minimum number of modules 
for search from an initial configuration without deadlock. 
Finally, we discuss the minimum number of modules 
for symmetry breaking and give 
the minimum number of modules for locomotion 
from an arbitrary initial configuration. 
We conclude this paper with Section~\ref{sec:concl}.

\section{Preliminary}
\label{sec:prel}

We consider the rectangular metamorphic robotic system introduced in
\cite{CYKY14,DSY04b,DSY04a,MSS19}. 
Consider a two dimensional (2D) square grid where each square cell $c_{i,j}$ 
is labeled by the underlying $x$-$y$ coordinate system. 
We consider a finite subspace of width $w$ and height $h$ and
call it the \emph{field}. 
Without loss of generality, we assume that $c_{0,0}$ is the
southwesternmost cell and $c_{w-1, h-1}$ is the northeasternmost cell 
(Figure~\ref{fig:subgrid}).  
Each cell $c_{i,j}$ has eight \emph{adjacent} cells; 
(E)ast $c_{i+1, j}$,
(N)orth(E)ast $c_{i+1, j+1}$,
(N)orth $c_{i, j+1}$,
(N)orth(W)est $c_{i-1, j+1}$,
(W)est $c_{i-1, j}$,
(S)outh(W)est $c_{i-1,j-1}$,
(S)outh $c_{i, j-1}$, and 
(S)outh(E)ast $c_{i+1, j-1}$. 
The four cells N, S, E, and W
are said to be \emph{side-adjacent} to $c_{i,j}$. 
An infinite sequence of cells with the same $x$ coordinate is called
a \emph{column} and
an infinite sequence of cells with the same $y$ coordinate is called
a \emph{row}. 
The field is surrounded by walls,
the $(-1)$st column (the west wall), 
the $w$th column (the east wall),
the $(-1)$st row (the south wall), and
the $h$th row (the north wall). 
These cell labels are used just for description and there is no 
way to distinguish cells.

A metamorphic robotic system $R$ consists of $n$ anonymous modules,
each of which occupies a distinct cell in the square grid at
discrete time steps $t=0, 1, 2, \ldots$.
The \emph{configuration} $C_t$ of $R$ at time $t$ is the set of cells
occupied by the modules at time $t$.
An \emph{execution} is an evolution of configurations
$C_0, C_1, C_2, \ldots$. 
The evolution is generated by movements of modules.
Let $M_t$ be the set of modules that move at time $t$. 
We call the modules in $B_t = C_t \setminus M_t$ a \emph{backbone},
that does not move at time $t$. 
There are two types of movements, a \emph{rotation} and a \emph{sliding},
guided by backbone modules (Figure~\ref{fig:movement}). 
A rotation of module $m$ side-adjacent to a backbone module $b$
is a rotation around $b$ by $\pi/2$ either clockwise or
counter-clockwise.
A $1$-sliding of module $m$ is a sliding to a side-adjacent cell. 
In this case, there must be two backbone modules; 
one is $b_1$ that is side-adjacent to $m$ and
the other is $b_2$ that is side-adjacent to $b_1$ and
the destination cell of $m$. 
A $k$-sliding ($k \geq 2,3, \ldots$) is defined in the same way 
and it requires $(k+1)$ backbone modules along the track.\footnote{
The metamorphic robotic system model in
\cite{CYKY14,DSY04b,DSY04a,MSS19} allows rotations and $1$-slidings.
We extend the original model by allowing $k$-slidings for
$k = 2, 3, \ldots$. 
We assume $k$ is constant with respect to the size of the field.}
In a rotation and sliding, 
the cells that $m$ passes must not contain any module. 
The modules cannot enter or pass through the cells of the walls.

\begin{figure}[t]
\centering
 \includegraphics[height=5cm]{./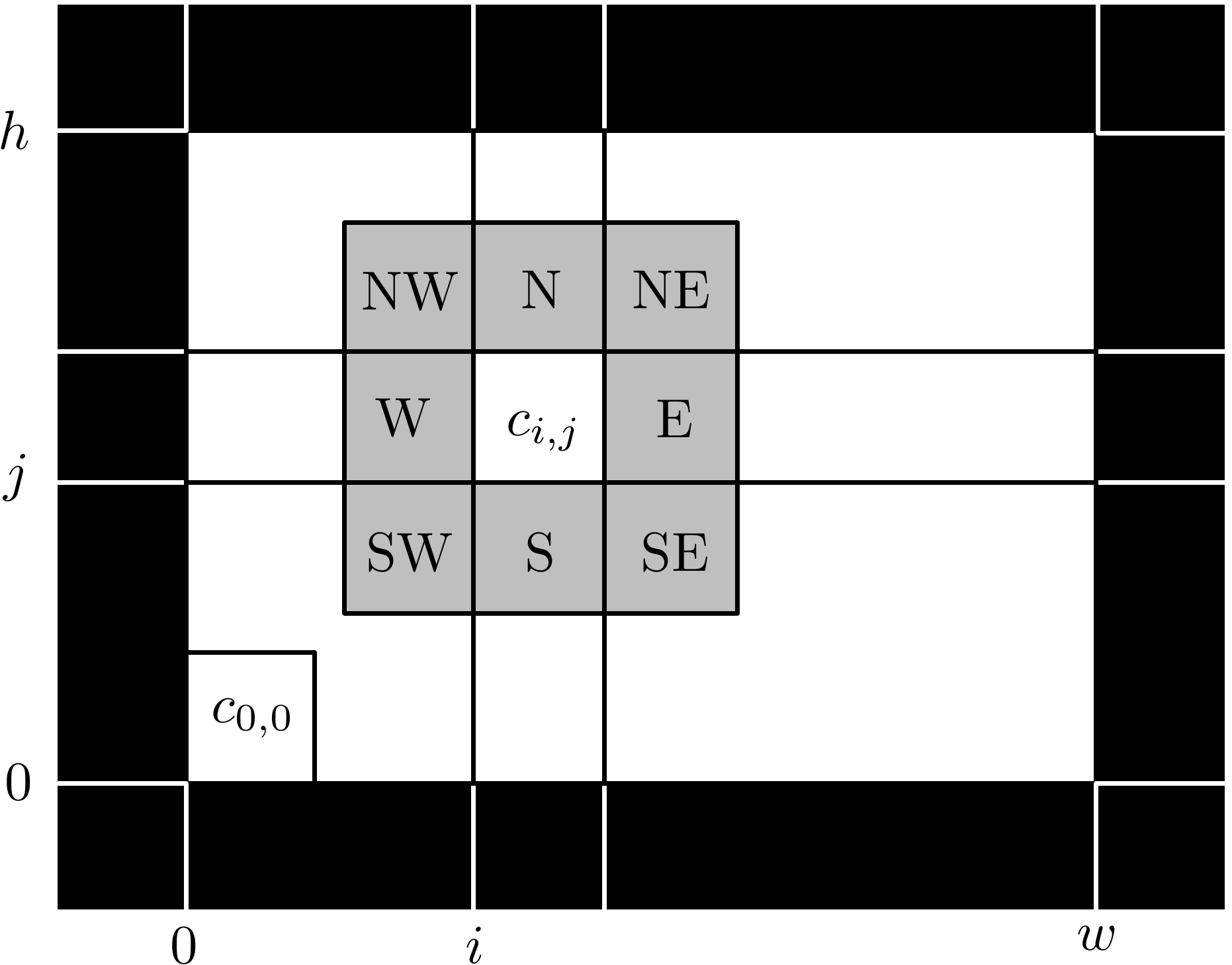}
 \caption{Field and walls.} 
 \label{fig:subgrid}
\end{figure}

\begin{figure}[t]
\centering
 \includegraphics[height=2cm]{./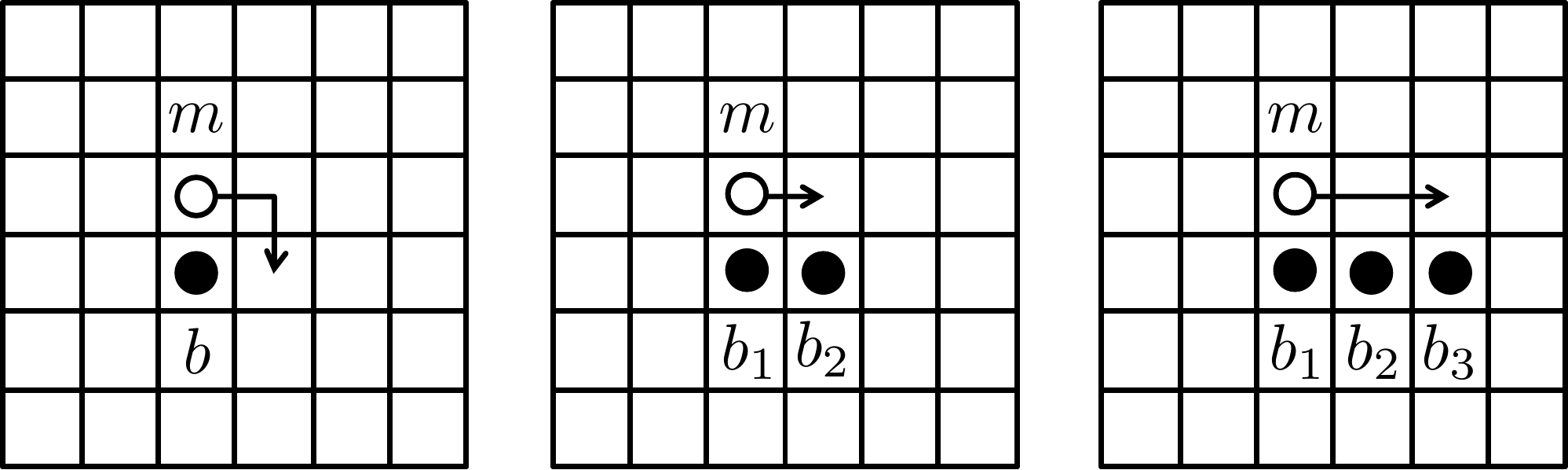}
 \caption{Rotation, $1$-sliding, and $2$-sliding}
 \label{fig:movement}
 \end{figure}

The connectivity of configuration $C_t$ is represented by
a connectivity graph $G_t =(C_t, E_t)$. 
The edge set $E_t$ contains an edge $(c,c')$ for $c, c' \in C_t$ 
if and only if cells $c$ and $c'$ are side-adjacent. 
When $G_t$ is connected, we say $C_t$ is \emph{connected}. 
Any execution $C_0, C_1, C_2, \ldots$ must satisfy the following
three conditions:
\begin{enumerate}
 \item Connectivity: For any $t=0,1,2, \ldots$,
       $C_t$ is connected.
 \item Single backbone: For any $t = 0,1,2 \ldots$,
       $B_t$ is connected.
 \item No interference: For any $t=0,1,2, \ldots$,
       the trajectories of two moving modules $m$ and $m'$
       never overlap. 
\end{enumerate}

The modules are anonymous and execute a common deterministic
distributed algorithm; they are \emph{uniform}. 
At each time step, each module observes the modules in its neighborhood 
and decides its movement; 
the modules are \emph{synchronous}. 
A cell $c_{i', j'}$ is a \emph{$k$-neighborhood} of cell $c_{i,j}$
if $|i'-i| \leq k $ and $|j'-j| \leq k$. 
A distributed algorithm of neighborhood size $k$ is a
total function that maps a 
$(2k+1) \times (2k+1)$ square grid to one cell; 
the modules are \emph{oblivious}. 
We assume that $k$ is constant with respect to $w$ and $h$, 
and a module can observe whether each cell in its $k$-neighborhood
is occupied by a module and whether it is a cell of a wall. 
When the modules are equipped with the \emph{global compass}, 
they share common north, south, east, and west directions.
When the modules are not equipped with the global compass, 
they do not know directions and their observations may be inconsistent.
However, we assume that the modules agree on the clockwise direction; 
they share a common handedness.

The \emph{state} of $R$ in $C_t$ is the local shape of $R$. 
We use $S^n$ to describe a state of $R$ consisting of $n$ modules. 
If the modules are equipped with the global compass,
the state of $R$ contains global directions, 
otherwise it does not contain any direction
because the modules cannot recognize any 
rotation on their state. 

When the modules are equipped with the global compass,
the execution of a given algorithm is uniquely determined 
by $C_0$ because the modules can agree on a total ordering 
among themselves. 
On the other hand, 
when the modules are not equipped with the global compass,
there exist multiple executions from $C_0$
depending on the local compass of each module. 

The \emph{search problem} requires 
the metamorphic robotic system to find a target put in one cell 
of a given field without any apriori information about 
the field, e.g., the size of the field, the target, 
the initial position of the metamorphic robotic system in the field, 
and so on. 
The target is stationary; it does not move during search. 
We call the cell where the target is put a \emph{target cell}. 
For simplicity, we assume that the field is large enough 
so that for any initial configuration there exists 
at least one cell that cannot be observed by any module. 
We say that the metamorphic robotic system 
\emph{finds} the target from a given initial configuration $C_0$
if in any execution from $C_0$, some module reaches the target cell 
and the metamorphic robotic system stops thereafter. 

The \emph{exploration problem} requires the metamorphic robotic system 
to visit all cells of the field without any apriori information 
about the field. 
We say that the metamorphic robotic system \emph{explores} 
a given field from a given initial configuration $C_0$ 
if in any execution from $C_0$, 
each cell of the field is visited by some module 
at least once.

\section{Search with global compass}
\label{sec:global}

In this section, we consider the metamorphic robotic system 
consisting of modules equipped with the global compass. 
When more than one modules equipped with the global compass 
form the metamorphic robotic system, 
they can agree on a total ordering of themselves. 
Hence, modules can perform different movements in any step. 
We give the following theorem.

\begin{thm}
 \label{theorem:global}
 Three modules equipped with the global compass 
 are necessary and sufficient for
 the metamorphic robotic system to find a target 
 in any given field from any initial configuration.  
\end{thm}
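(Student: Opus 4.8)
The plan is to prove necessity and sufficiency separately. For necessity I argue that one or two modules can never find the target. A single module can never move at all: every rotation or sliding requires at least one backbone module side-adjacent to the mover, but with $n=1$ the backbone $B_t=C_t\setminus M_t$ is empty, so the lone module stays inside its initial $k$-neighborhood forever and hence never reaches a cell it cannot observe (which exists by assumption). For $n=2$ a sliding needs two backbone modules, so the only legal move is a rotation of one module about the other; thus $C_t$ is always a domino and remains a domino. With the global compass the execution is uniquely determined by $C_0$, and the relevant ``state'' of a configuration is the domino's orientation together with the pattern of walls inside the modules' $k$-neighborhoods, of which there are only $\Order(1)$ many. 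When the domino is at distance more than $k$ from every wall there are just two states, horizontal and vertical, each a rotation of the other, so the interior motion follows one fixed rule whose net effect over two steps is a translation by some fixed vector $\vec v$.

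I then split on $\vec v$. If $\vec v=\vec 0$ --- or if the interior rule ever prescribes ``halt'' or an illegal move --- the domino is trapped in a region of size $\Order(1)$ as soon as it is far from the walls, and therefore never reaches a cell it cannot observe, contradicting the assumption that the field is large. If $\vec v\ne\vec 0$, the domino always drifts in the direction $\vec v$ while in the interior and can never reverse that heading there; since near any wall the domino stays within $\Order(k)$ of that wall, a case analysis on whether $\vec v$ is axis-parallel or diagonal should show that the cells it can ever occupy lie in the $\Order(k)$-wide boundary zone together with a bounded family of straight interior segments all parallel to $\vec v$ --- which, for a field chosen large enough, misses some interior cell, so the target placed there is never found.

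For sufficiency I would exhibit an explicit self-stabilizing algorithm based on a systematic sweep. From an arbitrary initial configuration the three modules first reconfigure into a canonical tromino shape and travel to a reference corner, say the southwest corner; they then perform a serpentine sweep --- cover a row moving east up to the east wall, shift one row north, cover the next row moving west up to the west wall, shift one row north, and so on --- until the north wall is reached, whereupon the system halts. Every cell, in particular the target cell, is visited before the halt, so this solves search. The reason three modules suffice where two do not is that a tromino has six shapes (two line orientations and four $L$-orientations) which, decorated by the pattern of nearby walls, give enough states to encode both the current phase (go-to-corner / east-sweep / shift-north / west-sweep / terminate) and the progress within each elementary group-move. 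Each elementary move --- translating the tromino one cell east or west, shifting it one cell north --- is realized by a short fixed sequence of rotations and slidings that respects connectivity, the single-backbone property and non-interference, and the shapes are chosen so that a wall appearing in a module's $k$-neighborhood triggers exactly the intended phase change. Self-stabilization follows by additionally specifying, for each of the finitely many tromino states, a transition that makes monotone progress toward the canonical ``at the reference corner, about to sweep'' configuration, so the algorithm recovers from any starting shape and position.

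The main obstacle is the sufficiency construction: one must design the elementary move sequences and the state encoding so that a single oblivious local rule simultaneously advances the current move, remembers the global phase, reacts correctly in every wall and corner situation, and funnels every initial tromino configuration into the canonical form --- a finite but sizable case analysis, every step of which must obey the three execution constraints at each micro-step. On the impossibility side the delicate point is making the two-module argument fully rigorous, i.e., bounding the length and number of the domino's wall segments tightly enough that the fixed interior drift genuinely rules out covering a large field.
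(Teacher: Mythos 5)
Your necessity half is essentially the paper's argument: one module has no backbone and cannot move; two modules are forever a domino whose deterministic interior behaviour reduces to a fixed two-step drift, and a case analysis on that drift plus the wall-following behaviour shows the reachable set misses some cell of a large field. The paper carries this out by enumerating the eight possible two-step moves and a decision diagram at the corners, which is the concrete form of your ``drift vector plus bounded wall segments'' sketch; that part is fine.

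The sufficiency half, however, has a genuine gap, and it sits exactly where you wave your hands. Your plan needs (at least) three distinct \emph{interior} phases far from any wall: ``travel to the southwest corner,'' ``sweep a row east,'' and ``sweep a row west.'' An oblivious algorithm must distinguish these phases purely from the current shape (there are no walls in sight to decorate the state), and three modules admit only six fixed shapes: two orientations of the line and four of the $L$. The paper's algorithm spends all six on just the two row-sweep phases --- the eastward move cycles through $S^3_1 \to S^3_2 \to S^3_3$ and the westward move through $S^3_4 \to S^3_5 \to S^3_6$, and its correctness rests on the explicit design principle that no two moves share a state-plus-wall-pattern. There is no room left for a third interior phase, so ``first go to a canonical corner, then serpentine'' cannot be encoded; if you reuse a sweep state to approach the corner, the same observation would have to trigger two different movements. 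The paper avoids the problem by having \emph{no} go-to-corner phase: from an arbitrary initial shape the system immediately begins a row sweep, descends row by row to the south wall, and only then climbs a side wall northward --- vertical motion happens exclusively where a wall is visible and supplies the extra state information --- after which it covers the rows it missed. Self-stabilization then comes for free from the fact that every one of the six states (plus a short list of wall/corner exceptions) already belongs to some move of this single wrap-around track, rather than from a separate ``funnel to canonical form'' procedure, which is the second thing your proposal asserts but does not construct.
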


Theorem~\ref{theorem:global} indicates that 
three modules can perform self-stabilizing search 
in any given field because 
they can start from any initial configuration.

We demonstrate the necessity by impossibility for less than three modules 
in Section~\ref{subsec:global-nec} 
and the sufficiency with a search algorithm for three modules 
in Section~\ref{subsec:global-suf}.

\subsection{Impossibility for less than three modules} 
\label{subsec:global-nec}

By definition, a single module cannot perform any movement 
because there is no backbone module. 
When two modules form the metamorphic robotic system, 
one module can always perform a rotation. 
By repeating a rotation, 
the metamorphic robotic system can move in the field, 
however we demonstrate that it cannot find the target. 

\begin{lem} 
 \label{lemma:global-nec}
 Consider the metamorphic robotic system $R$ 
 consisting of less than three modules equipped with the global compass 
 in a sufficiently large field. 
 For any deterministic algorithm $A$ and any initial state of $R$, 
 there exists a choice of the target cell 
 such that $R$ cannot find the target. 
\end{lem}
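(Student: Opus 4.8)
The plan is to handle $n=1$ and $n=2$ separately and, for $n=2$, to reduce the statement to a visited-cells count. For $n=1$ there is no backbone, so the single module can never perform a rotation or a sliding and stays on its initial cell forever; since the field is large enough that some cell is never observed, placing the target on any cell other than the initial one makes the instance unsolvable. For $n=2$, note that since the modules are deterministic and share the global compass the execution from a fixed $C_0$ in a fixed field is unique, so if the target is placed on a cell that is never occupied by a module in that execution then no module ever reaches it; hence it suffices to exhibit a sufficiently large field in which the execution visits only a proper subset of the cells.

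The core of the $n=2$ case is how little such a system can remember. Its configuration is always a \emph{domino} of two side-adjacent cells; a $1$-sliding needs two backbone modules and is impossible, so the only move is a rotation of one module about the other, and since a rotation needs a backbone pivot, at most one module moves per step. A rotation always switches the orientation between horizontal and vertical, and once no module moves the system is halted forever. Hence the orientation equals a fixed function of the parity of the current step and cannot be used as memory: the system is genuinely oblivious, and the move made in a step is determined solely by the translation-invariant pattern of modules and walls inside the $(2k+1)\times(2k+1)$ window of the moving module.

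Using this I would pin down the trajectory. More than $k$ cells from every wall the window shows only the partner module, so there is a single window type per orientation; the interior rule is thus fixed, and over the two-step orientation cycle it translates the domino by a fixed vector $\vec v$. If $\vec v=\vec 0$ (or the interior rule halts), a domino that reaches the deep interior is trapped in $O(1)$ cells, so, counting also its earlier wandering, it has visited only cells in the width-$O(k)$ band of cells that can see a wall, i.e.\ $O(k(w+h))$ cells. If $\vec v\neq\vec 0$, then in the deep interior the domino always moves in the single direction $\vec v$ and can never turn back there, so its trajectory is a union of straight segments parallel to $\vec v$ joined by maneuvers confined to the boundary band. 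The near-wall behavior is again governed by finitely many window types, hence by a fixed finite-state rule, and within each wall region this rule is translation-invariant along that wall; it therefore either always drifts the domino along the wall toward a corner or always pushes it off into the interior, but cannot alternate the two. Consequently the domino launches interior segments only from a bounded set of positions near the corners, so it runs only $O(1)$ distinct segments before cycling, and in an $N\times N$ field it visits $O(kN)$ cells in total. Choosing $N$ large relative to $k$ leaves an unoccupied cell; placing the target there finishes the argument.

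The delicate point is this last step --- showing that only $O(1)$ distinct interior segments, hence $o(N^2)$ cells, are ever visited. This is exactly where obliviousness is used: a boustrophedon-type sweep of the interior would require the domino to alternate ``advance along a wall'' with ``peel off into the interior'', i.e.\ to take two different actions at the same window type, which a memoryless system cannot do.
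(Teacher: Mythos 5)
Your proof is correct and rests on the same core idea as the paper's: with two modules the system is a domino whose only move is a rotation, its orientation is slaved to the step parity, so in the interior it can only translate by a fixed vector and near the walls it can only wall-follow, whence it visits $o(wh)$ cells and a target in an unvisited cell is never found. The difference is in execution. The paper makes this concrete by enumerating the eight possible two-step composite moves from a horizontal state (its Figure~\ref{fig:2-move}), then tracing the forced behavior at walls and corners through an explicit decision diagram that yields two possible routes, and exhibiting initial configurations for which each route misses cells. You instead argue abstractly from obliviousness and translation invariance: one window type per orientation in the deep interior forces a single drift vector $\vec v$, finitely many window types in each wall band force an eventually periodic wall-following behavior, and the impossibility of a boustrophedon sweep is pinned to the fact that a memoryless system cannot take two different actions at the same window type. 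Your version is more uniform and makes the quantitative conclusion ($O(k(w+h))$ visited cells) explicit, which the paper leaves implicit; the paper's enumeration, on the other hand, delivers the concrete trajectories and is easier to verify case by case. The one place where your wording is looser than your own standard is the claim that the wall rule ``either always drifts the domino along the wall toward a corner or always pushes it off into the interior, but cannot alternate the two'': in fact the periodic wall-band cycle could do both in different phases, but this is harmless because, as both you and the paper observe, a push into the interior is immediately undone by the interior drift $\vec v$ pointing back at the wall just left. Neither gap is fatal; both arguments are at the same level of informality, and yours covers all initial states and all drift directions at once where the paper appeals to symmetry to reduce to a few cases.
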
 
\begin{proof}
 The metamorphic robotic system consisting of a single module cannot 
 perform any movement and search is impossible.\footnote{If we allow
 random walk of a single module, of course, it can find the target
 with probability $1$.} 
 We consider the metamorphic robotic system $R$ 
 consisting of two modules,
 each of which can observe its $k(>0)$ neighborhood where
 $k$ is a constant with respect to the size of the field. 
 Assume that there exists a deterministic search algorithm $A$  
 for the two modules. 

 There are two connected states of the two modules. 
 In each state, $A$ specifies which module to take 
 what type of movement. 
 $R$ resumes the same state after two movements. 
 Figure~\ref{fig:2-move} shows 
 all possible moves from a horizontal state 
 after two steps. 
 See also Figure~\ref{fig:ex-2-move} as an example of 
 movements in two steps. 
 If the two modules do not see a cell or the target, 
 they repeat one of the eight moves and keep on moving 
 to one direction. 
 Note that we can obtain all possible moves after two movements 
 from the other (vertical) state by rotating 
 Figure~\ref{fig:2-move}. 

\begin{figure}[t]
 \centering
 \includegraphics[height=5cm]{./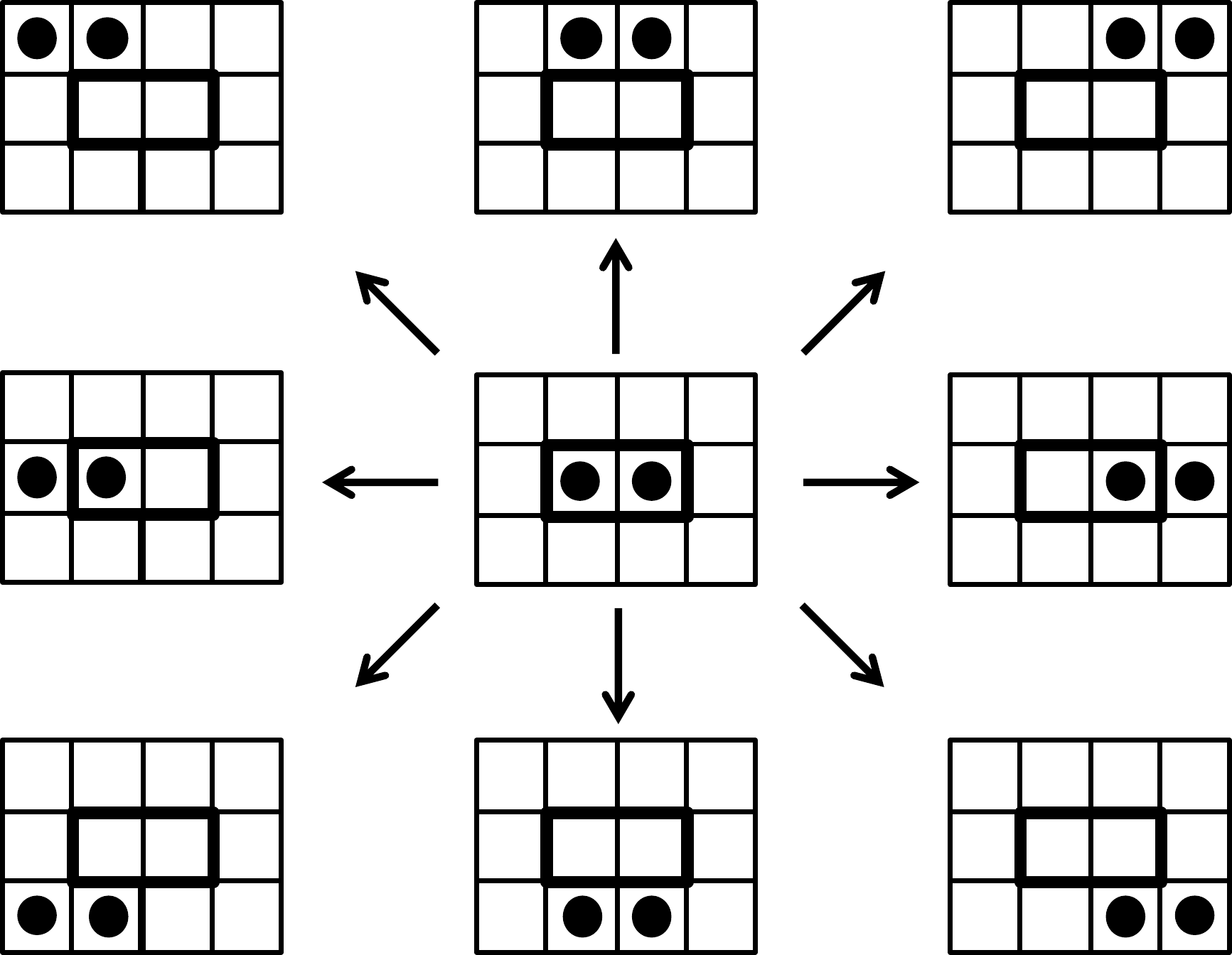}
 \caption{Possible moves of two modules. The bold rectangle marks 
 the initial positions of the two modules. } 
 \label{fig:2-move}
\end{figure}

\begin{figure}[t]
 \centering
 \includegraphics[width=6cm]{./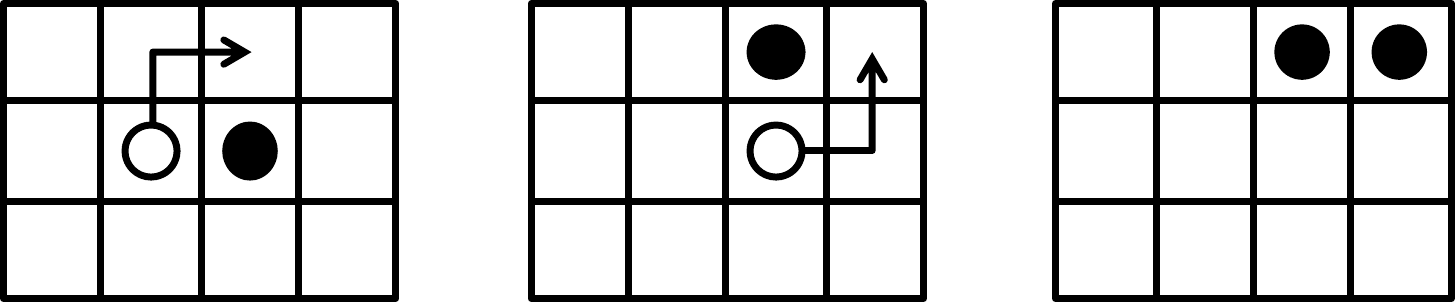}
 \caption{Example of two-step  moves.} 
 \label{fig:ex-2-move}
\end{figure}

 Without loss of generality, we assume that 
 the initial state of $R$ is the horizontal state. 
 We first consider the case where the modules start moving  
 to the northeast or to the east. 
 When the field is large enough so that 
 the modules around the center of the field
 cannot see the wall in their constant neighborhood, 
 $A$ makes $R$ keep on moving to one direction. 
 For any $w$ and $h$, there exists an initial configuration where 
 one module is in the $\lfloor w/2 \rfloor$th column 
 and after the straight move the two modules reach a wall where 
 they cannot see any corner of the field. 
 If the two modules leave the wall, 
 as soon as the wall goes out of their sight, 
 they return to a wall with the straight move. 
 The modules can visit the cells near the wall but they cannot leave the wall.  
 Another choice of $A$ is to make $R$ 
 keep on moving along the wall to one direction. 
 In both cases, the two modules eventually reaches a corner 
 of the field. 
 At this corner, $A$ has no choice other than to 
 make the two modules move around the adjacent wall 
 because even when they leave the corner, 
 they will return to the first wall due to the straight move of $A$. 
 Then, by the same discussion as the first wall, 
 two modules move along the second wall until it reaches 
 the second corner. 
 There are two choices at the second corner and 
 Figure~\ref{fig:decisiondiagram} shows the decision diagram. 
 There are two different routes depending on the choice of the 
 second corner. 
 As shown in Figure~\ref{fig:2-track-a} and 
 \ref{fig:2-track-b}, there are initial configurations from 
 which these two routes cannot 
 visit all cells in the field. 
 If the target is put in unvisited cells, 
 the metamorphic robotic system cannot find it. 

 By the same discussion, when the two modules start moving to the south east, 
 there are cells that the metamorphic robotic system cannot visit. 
 In this case, the decision diagram branches not at the second corner 
 but at the third corner. 
 We can then extend the proof where the two modules start moving 
 to the north east, to the north, to the south east, to the north, 
 or to the south. 

 Consequently, two modules cannot find a target put in an arbitrary 
 cell of an arbitrary field. 
 \qed

\begin{figure}[t]
 \centering 
 \includegraphics[width=13cm]{./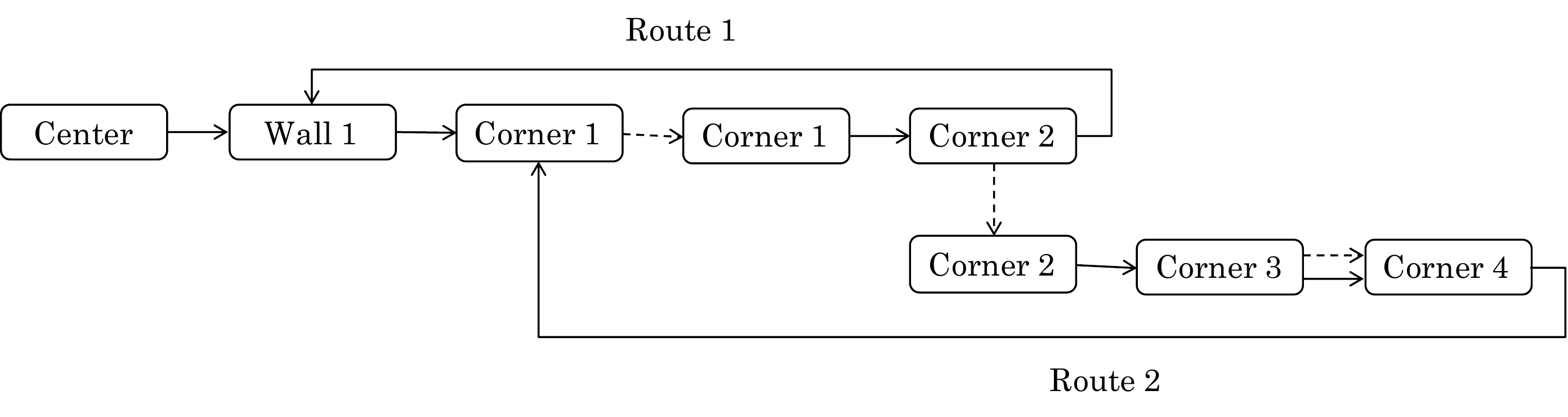}
 \caption{Decision diagram for the combination of moves when 
 the two modules starts to move to the northeast or to the east. 
 A solid arrow represents a straight move and a broken arrow represents a turn at a corner.}
 \label{fig:decisiondiagram}
\end{figure}

\begin{figure}[t]
 \centering 
 \begin{minipage}{0.45\hsize} 
  \centering
  \includegraphics[height=2.8cm]{./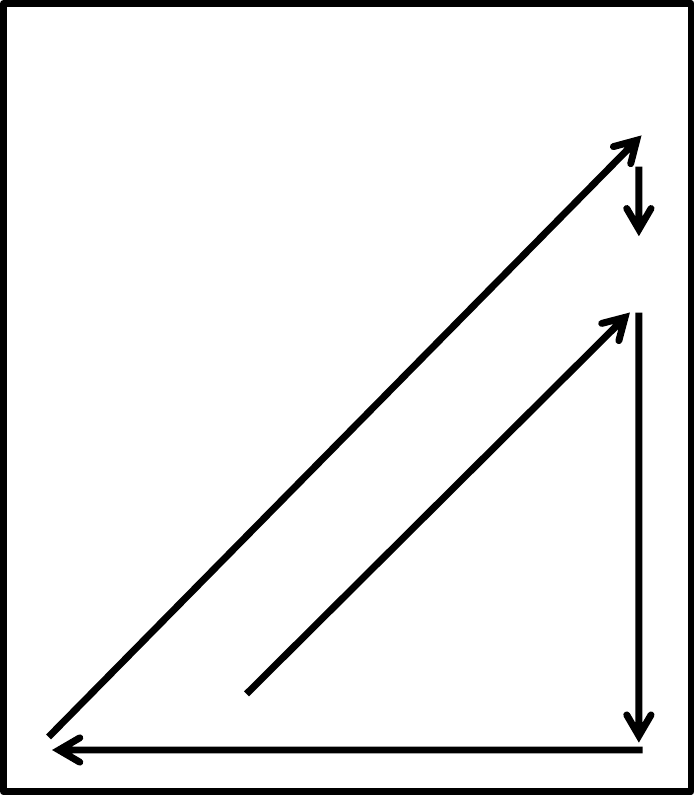}
  \caption{Example of Route $1$.}
  \label{fig:2-track-a}
 \end{minipage}
 \begin{minipage}{0.45\hsize} 
  \centering
  \includegraphics[height=2.8cm]{./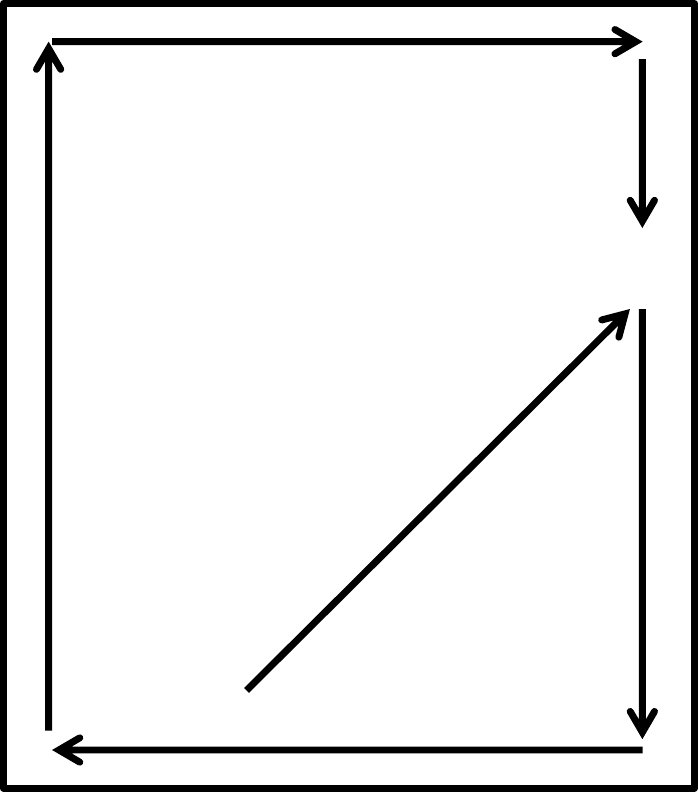}
  \caption{Example of Route $2$.}
  \label{fig:2-track-b}
 \end{minipage}
\end{figure}
\end{proof}

\subsection{Search algorithm for three modules} 
\label{subsec:global-suf}

In this section, we show the sufficiency of Theorem~\ref{theorem:global} 
with a search algorithm that enables 
three modules equipped with global compass to find a target. 
The proposed algorithm is so powerful that 
it can guarantee self-stabilizing search; 
three modules can start search from any initial configuration. 
We will show the following lemma. 
\begin{lem}
 \label{lemma:global-suf}
 Three modules equipped with the global compass 
 are sufficient for
 the metamorphic robotic system to find a target 
 in any given field from any initial configuration.  
\end{lem}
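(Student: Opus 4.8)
The plan is to exhibit an explicit deterministic algorithm of constant neighborhood size for three modules with the global compass, and to argue that from \emph{any} initial configuration the system first normalizes into a canonical ``travelling'' shape and then performs a systematic boustrophedon (snake-like) sweep of the whole field, so that in particular it visits the target cell and halts. Since three modules with the global compass can agree on a total order of themselves, each step we may name the modules unambiguously (say the lexicographically-least occupied cell is module~$1$, etc.), which is what lets a deterministic rule be well defined. The key realization is that, with three modules, the only connected shapes are, up to translation, the horizontal tromino $I_3^{\mathrm h}$, the vertical tromino $I_3^{\mathrm v}$, and the four rotations of the L-tromino; using $2$-slidings (allowed by the extended model) the $I_3$ shapes can translate one cell in their long direction while staying connected, and rotations let the system turn corners and switch between horizontal and vertical orientation.

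The first step is the \textbf{normalization phase}: design the rule so that from any of the finitely many $3$-module shapes the system reaches a fixed reference shape in a bounded number of steps --- e.g.\ a horizontal $I_3$ whose modules I label west-to-east as $1,2,3$. Because the state set is finite and the global compass removes all directional ambiguity, this is just a matter of specifying a contraction on a finite transition graph that terminates at the reference shape; self-stabilization then comes for free, since the algorithm ignores history and only looks at the current local picture. The second step is \textbf{locating a wall and a corner}: from the reference horizontal shape, have the system march west (by repeated $2$-slidings, reorienting as needed) until module~$1$ detects the west wall in its $k$-neighborhood, then march south along the west wall until it detects the southwest corner. These detections are possible because a module sees whether each cell of its $k$-neighborhood is a wall cell; the field is assumed large enough that these phases are unambiguous. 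The third step is the \textbf{sweep phase}: starting from the southwest corner, move east along the bottom row until the east wall is seen; then climb one row (a rotation/sliding maneuver that the three-module shape can realize without breaking connectivity, using the wall as a reference so the turn direction is deterministic); then move west along that row until the west wall is seen; climb again; and so on. This is the classic serpentine traversal, and one checks that every cell of the field is occupied by some module at some time --- here I would note that with a width-$3$ horizontal shape one actually sweeps three columns per vertical pass, or, more carefully, have the shape become vertical during horizontal passes so the swept band has height~$1$ and no cell is skipped; whichever convention is cleanest, the point is that the union of trajectories covers $\{c_{i,j} : 0\le i<w,\ 0\le j<h\}$.

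The final step is \textbf{termination}: when the system detects the northern wall \emph{and} has just finished the last horizontal pass (a local condition: the relevant corner configuration is visible in the $k$-neighborhood once $h$ is small enough relative to the shape, or detect ``north wall seen at the end of a pass'' which happens after finitely many passes), it stops; and whenever any module's $k$-neighborhood contains the target, the algorithm halts immediately. Combining: from any $C_0$, within a bounded prefix the system is in the reference shape; after $O(w+h)$ further steps it is at the southwest corner; the serpentine sweep then visits every cell within $O(wh)$ steps, hence reaches the target cell and stops. I expect the \textbf{main obstacle} to be verifying connectivity and non-interference throughout the corner-turning and row-climbing maneuvers --- ensuring that the three modules always have a connected backbone during every rotation/sliding and that the moving module's swept cells are empty and inside the field --- and making the corner/wall detections genuinely deterministic and unambiguous with a \emph{constant} neighborhood radius $k$ independent of $w,h$; the rest (normalization, coverage, halting) is a routine finite case analysis once the maneuver gadgets are fixed.
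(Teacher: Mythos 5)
Your overall strategy (search by exhaustive exploration with a boustrophedon sweep, target detected in a constant neighborhood) is the same as the paper's, but the three-phase architecture you propose --- normalize the shape, travel to the southwest corner, then serpentine from that corner --- has a genuine gap that the paper's design is specifically built to avoid. The modules are oblivious, so every ``phase'' must be encoded entirely in the current shape of the system together with whatever walls are visible in the constant-radius neighborhood. Three modules admit only six connected shapes, and a perpetual horizontal locomotion gadget consumes a disjoint cycle of three of them (the paper's move to the east uses $S^3_1, S^3_2, S^3_3$ and its move to the west uses $S^3_4, S^3_5, S^3_6$). Hence, away from all walls, the system supports exactly two mutually distinguishable locomotion modes. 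Your plan needs at least three such wall-free modes: ``march west to find the west wall, then turn south toward the corner'', ``sweep west, then climb at the wall'', and ``sweep east''. The first two are both westward motions that must trigger different behavior upon reaching the west wall, yet in the interior of the field they would have to reuse the same shapes (there is no room for a third disjoint shape cycle), so an oblivious system cannot tell which behavior to execute once the wall comes into view. The same obstruction defeats any variant that routes to a canonical corner before sweeping, so this is an architectural problem, not the wall-detection detail you flagged as the main obstacle.

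The paper resolves this by dispensing with the positioning phase altogether: the system begins sweeping rows from wherever it happens to be, always descending at the side walls; upon reaching the south wall it ascends along one side wall to the top row (these vertical modes operate only within sight of a wall, so they do not consume the wall-free shape budget) and then sweeps downward again, which covers exactly the cells missed before the first descent. Self-stabilization is thus obtained not by normalizing the position but by making the sweep pattern closed under arbitrary starting points. Two smaller remarks: your termination-after-full-sweep condition is unnecessary and hard to detect obliviously --- the problem only requires halting when the target is seen, and the paper lets the system explore perpetually otherwise; and your coverage worry about the band occupied by the shape is real but minor --- the paper tracks a single reference point (defined by the spine and frontier) through rows $1$ to $h-2$ and observes that the two extreme rows are covered by the modules lying above or below the spine.
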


Our basic method is exploration, i.e., 
the metamorphic robotic system $R$ visits all cells of the field. 
$R$ moves to the south with sweeping each row.
However, since the initial configuration is arbitrary,
when it reaches the southernmost ($0$th) row,
it moves to the northernmost ($(h-1)$st) row
along either the east wall or the west wall
and it explores unvisited cells. 
Figure~\ref{fig:3-tracks} shows examples of ``tracks'' of $R$.
Depending on the number of rows and an initial configuration, 
$R$ moves along one of such tracks. 
We demonstrate the progress of exploration 
using a reference point of $R$ 
defined by its \emph{spine} and \emph{frontier} that will be defined
later.
The tracks in Figure~\ref{fig:3-tracks} show the tracks of
reference points. 
Note that the reference point does not refer to some specific module.
Rather, different modules serve as temporal reference points
during exploration. 

The proposed algorithm consists of the following basic moves;  
\begin{itemize}
 \item A move to the east and a move to the west.
 \item A turn on the east wall and a turn on the west wall. 
 \item A turn on the southwest corner and a turn on the southeast corner. 
 \item A move to the north wall along the east wall and
       that along the west wall. 
 \item A turn on the northeast corner and a turn on the northwest corner. 
\end{itemize}
Figure~\ref{fig:3-states} shows all possible states of $R$. 
We assume that each module can observe the cells in its
$2$-neighborhood. 
When one module of $R$ reaches a cell with the target,
$R$ stops. 
More precisely, due to the sufficient visibility, 
when one module reaches a target cell, 
the other modules can detect the target
and never perform any movement thereafter. 

\begin{figure}[t]
 \centering
   \includegraphics[height=2.8cm]{./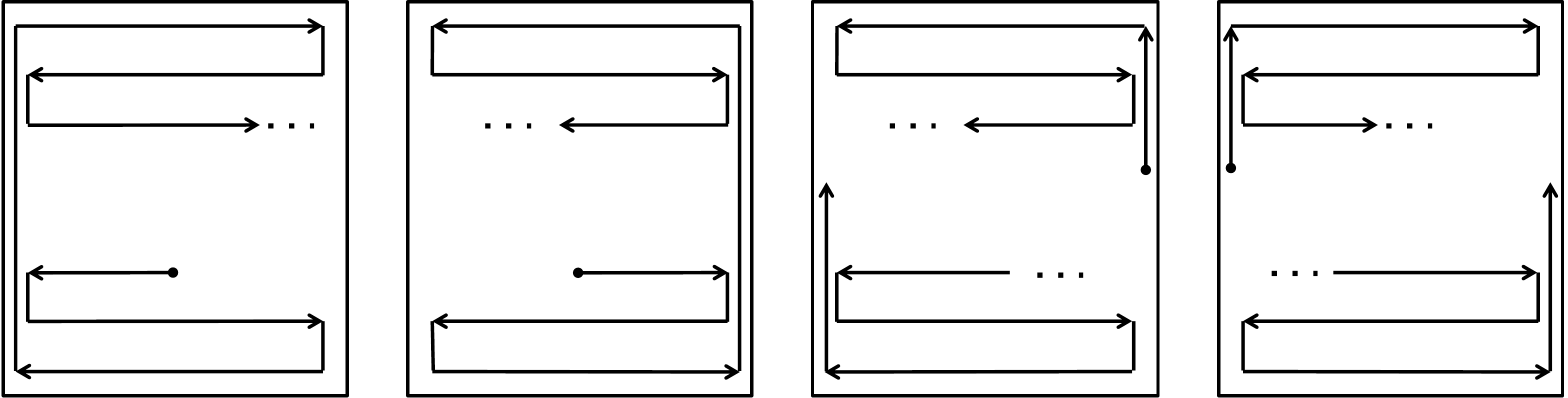}
 \caption{Example of exploration tracks of the metamorphic robotic system 
consisting of three modules.  
 Each track starts from the black circle.}
   \label{fig:3-tracks}
\end{figure}

\begin{figure}[t]
 \centering
 \includegraphics[height=2.5cm]{./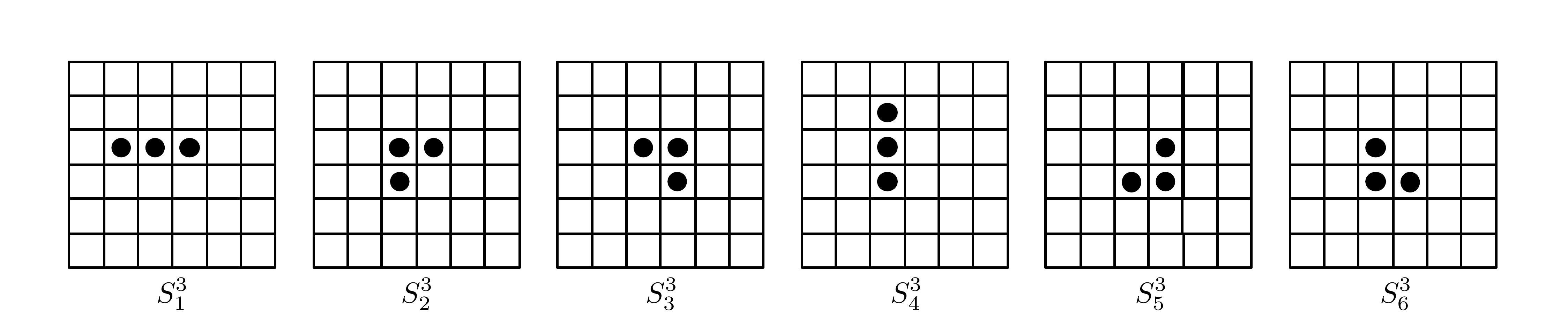}
 \caption{States of $R$ consisting of three modules} 
 \label{fig:3-states}
 \end{figure}

\begin{figure}[!htbp]
 \centering
 \begin{tabular}{c}

  \begin{minipage}{0.95\hsize}
   \centering 
   \includegraphics[height=2.5cm]{./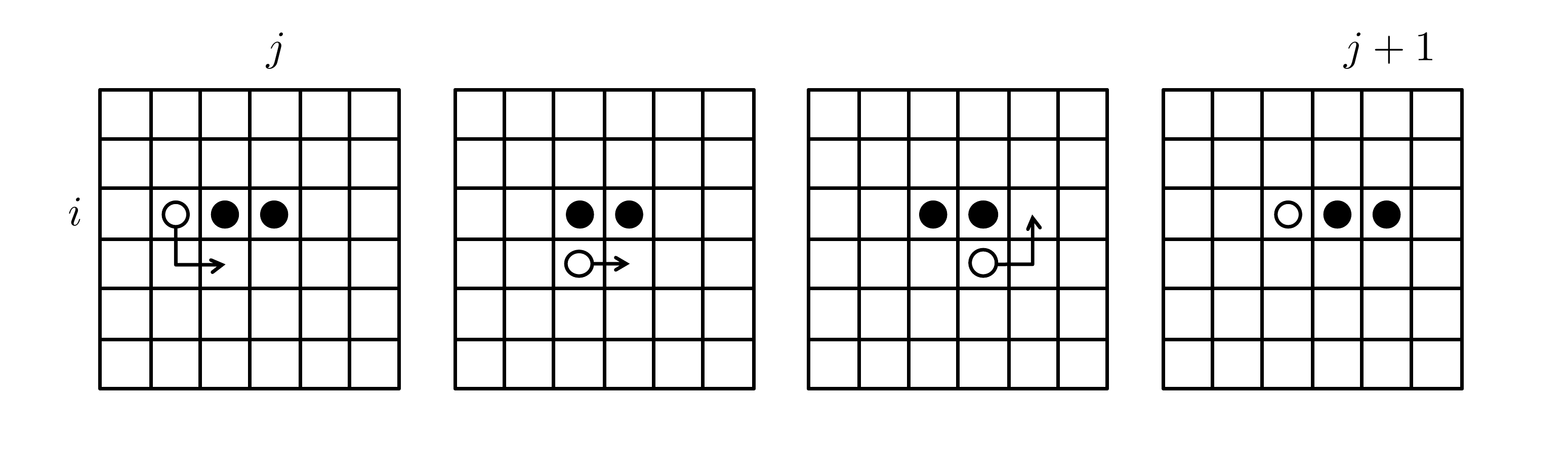}
   \caption{Move to the east ($S^3_1 \to S^3_2 \to S^3_3 \to S^3_1$)}
   \label{fig:3-to-east}
  \end{minipage}
\\ 
  \begin{minipage}{0.95\hsize}
   \centering 
   \includegraphics[height=2.5cm]{./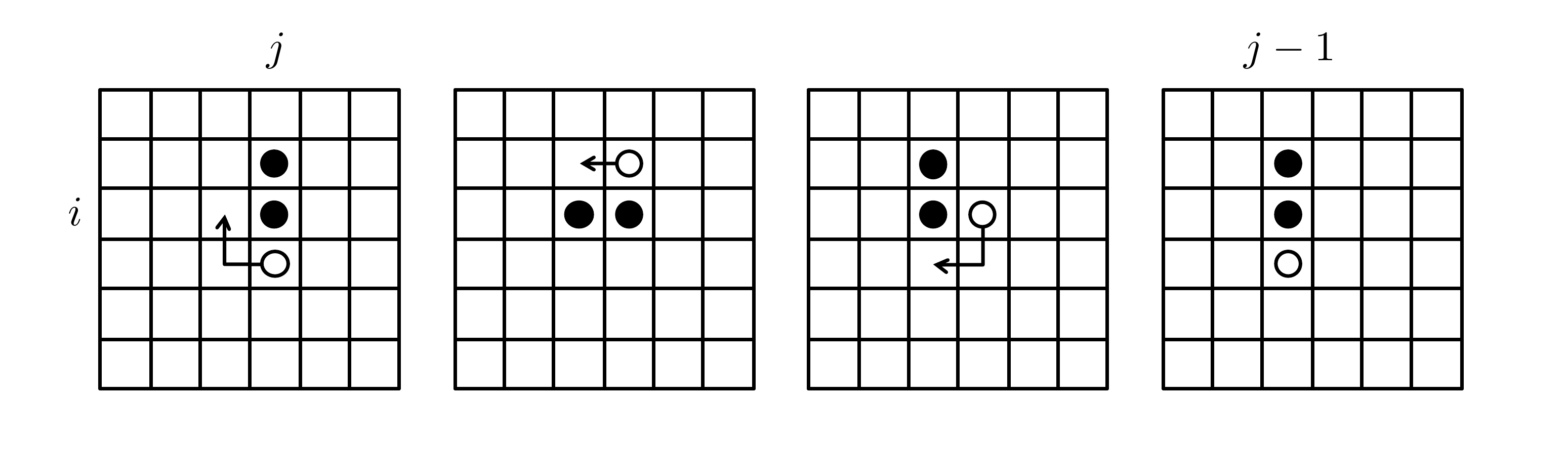}
   \caption{Move to the west ($S^3_4 \to S^3_5 \to S^3_6 \to S^3_4$)}
   \label{fig:3-to-west}
  \end{minipage}
  \\
    \begin{minipage}{0.95\hsize}
   \centering
   \includegraphics[height=2.5cm]{./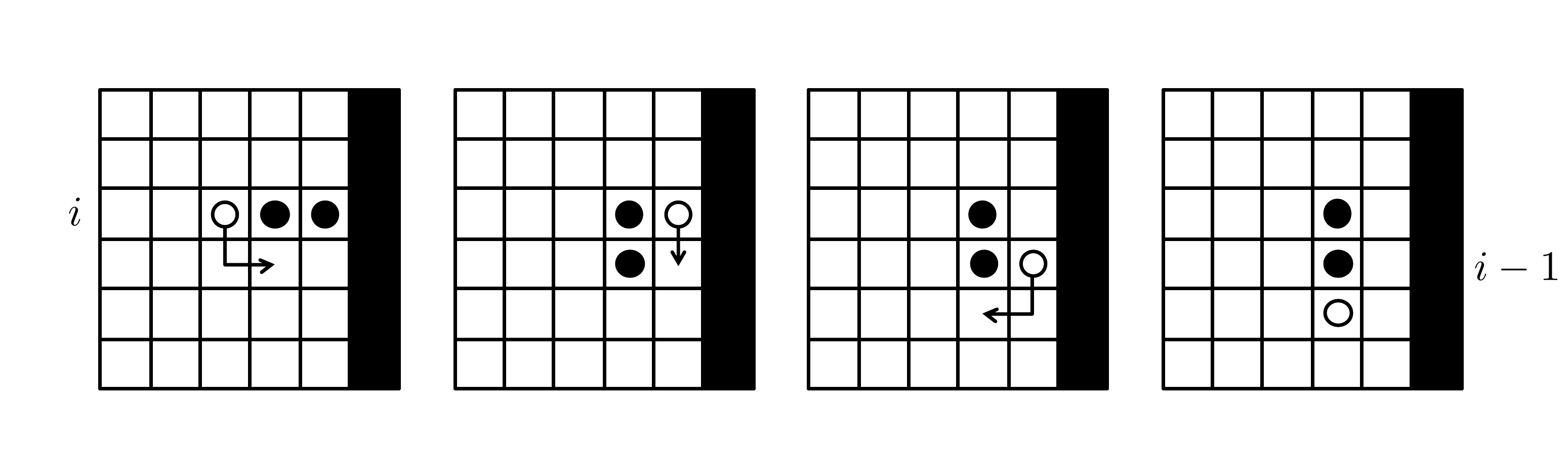}
   \caption{Turn on the east wall ($S^3_1 \to S^3_2 \to S^3_6 \to S^3_1$)} 
   \label{fig:3-east-wall}
  \end{minipage}
\\
  \begin{minipage}{0.95\hsize}
   \centering
   \includegraphics[height=2.5cm]{./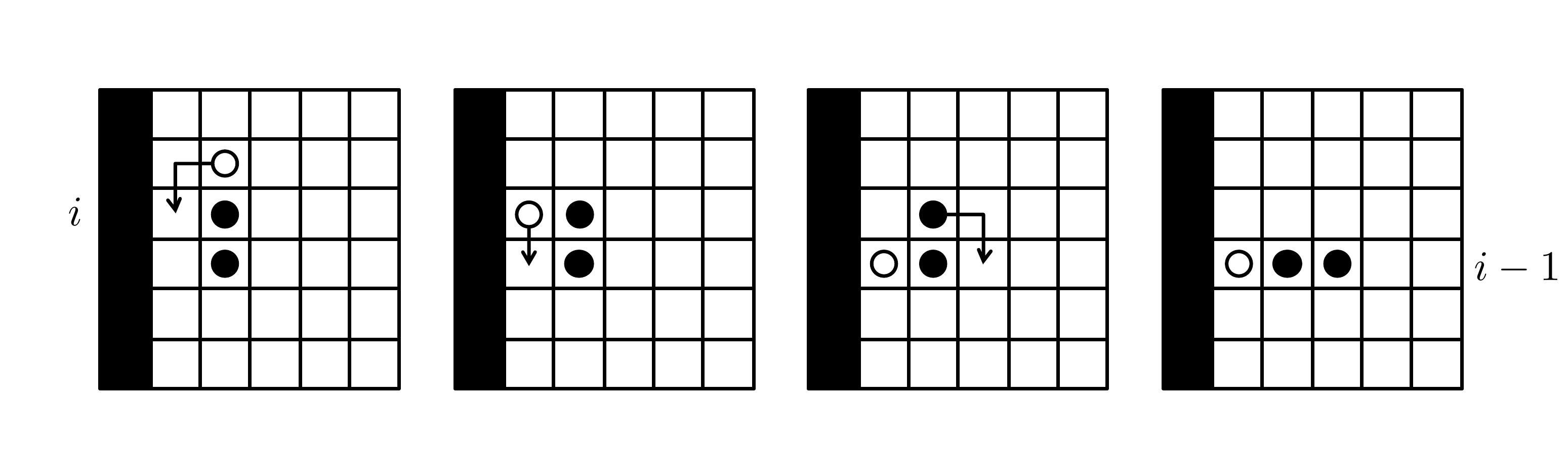}
   \caption{Turn on the west wall ($S^3_4 \to S^3_3 \to S^3_5 \to S^3_1$)} 
   \label{fig:3-west-wall}
  \end{minipage}
  \\ 
  \begin{minipage}{0.95\hsize}
   \centering
   \includegraphics[height=2.5cm]{./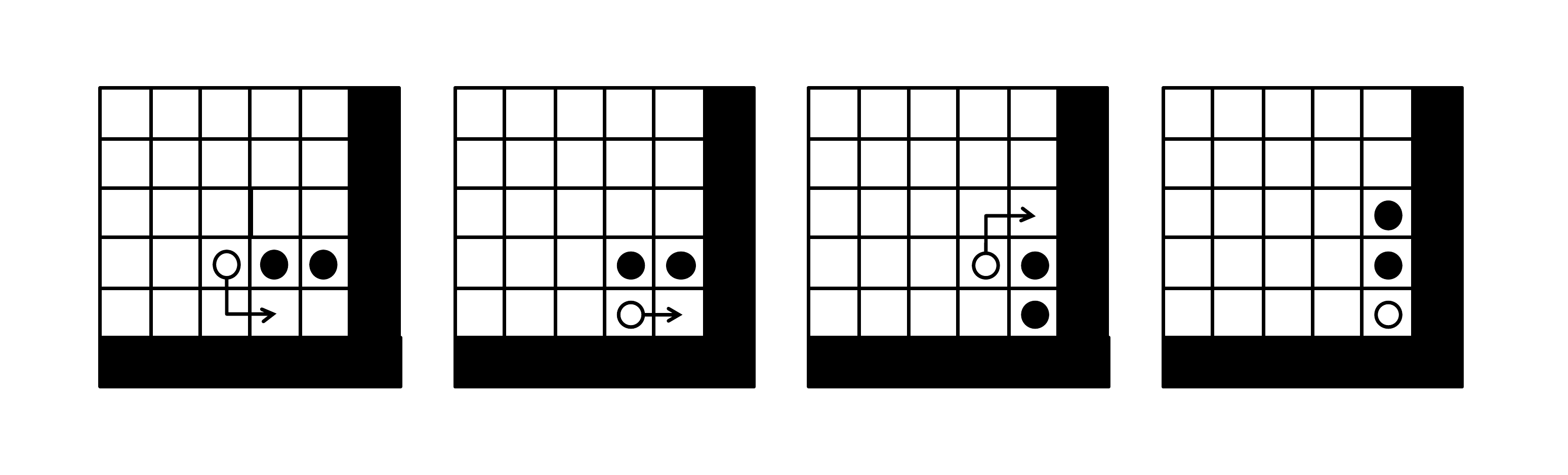}
   \caption{Turn on the southeast corner
    ($S^3_2 \to S^3_3 \to S^3_4$)} 
   \label{fig:3-south-east-corner}
  \end{minipage}
 \end{tabular}
\end{figure}

\begin{figure}[!htbp]
 \centering
 \begin{tabular}{c}
  \begin{minipage}{0.95\hsize}
   \centering
   \includegraphics[height=2.5cm]{./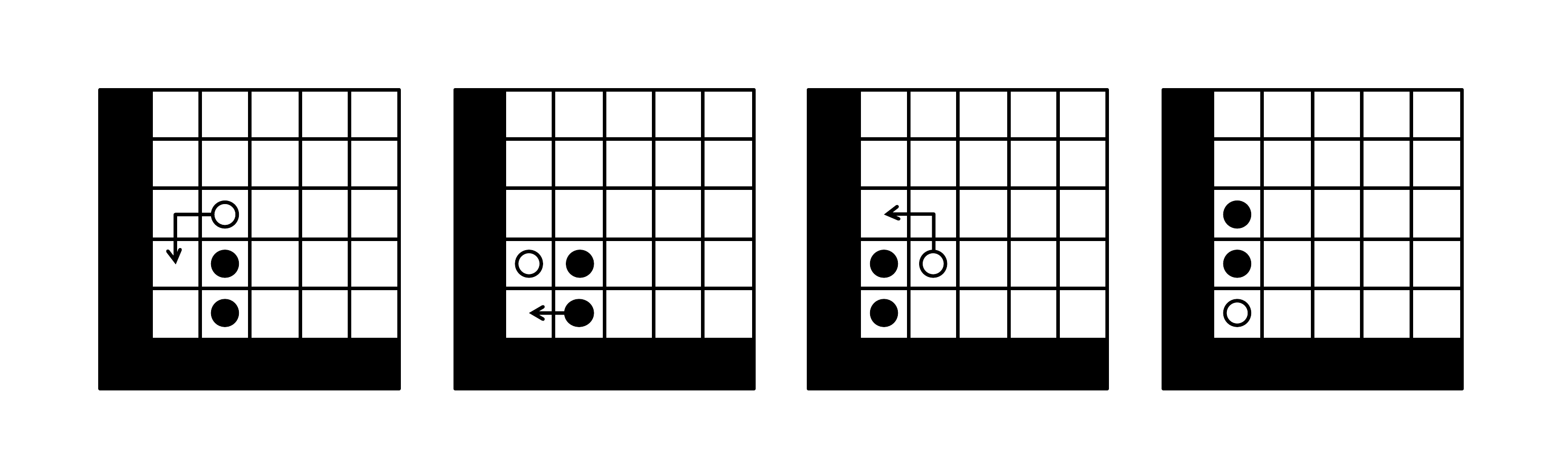}
   \caption{Turn on the southwest corner
   ($S^3_4 \to S^3_3 \to S^3_2 \to S^3_4$)} 
   \label{fig:3-south-west-corner}
  \end{minipage}
\\ 
  \begin{minipage}{0.95\hsize}
   \centering
   \includegraphics[height=2.5cm]{./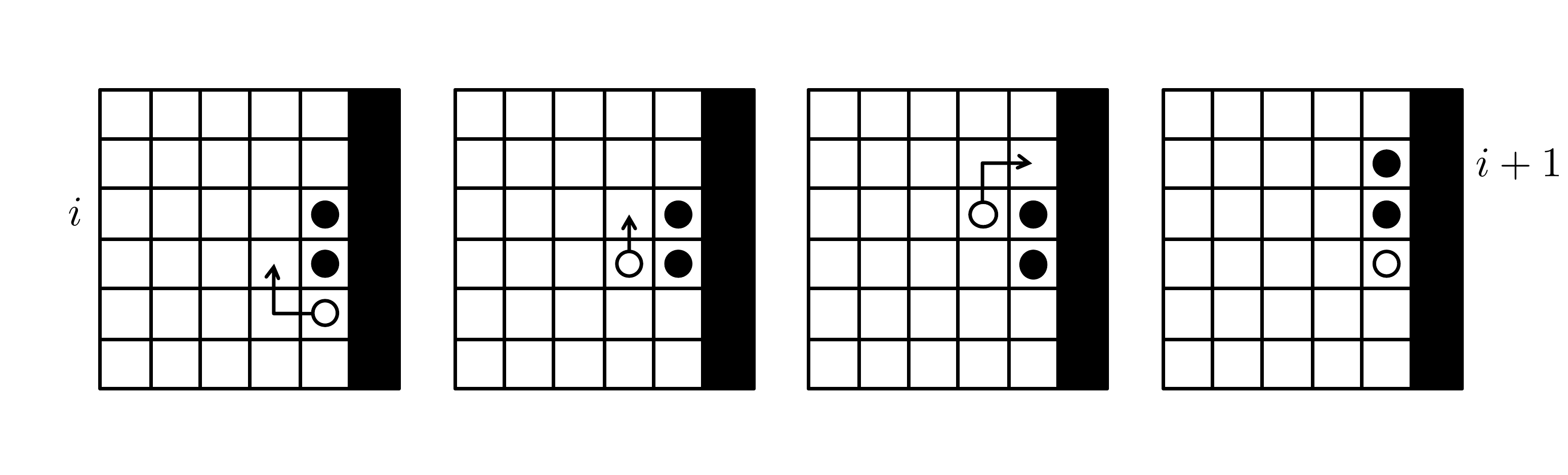}
   \caption{Move to the northeast corner
     ($S^3_4 \to S^3_5 \to S^3_3 \to S^3_4$)} 
   \label{fig:3-to-north-east-corner}
  \end{minipage}
  \\ 
  \begin{minipage}{0.95\hsize}
   \centering
   \includegraphics[height=2.5cm]{./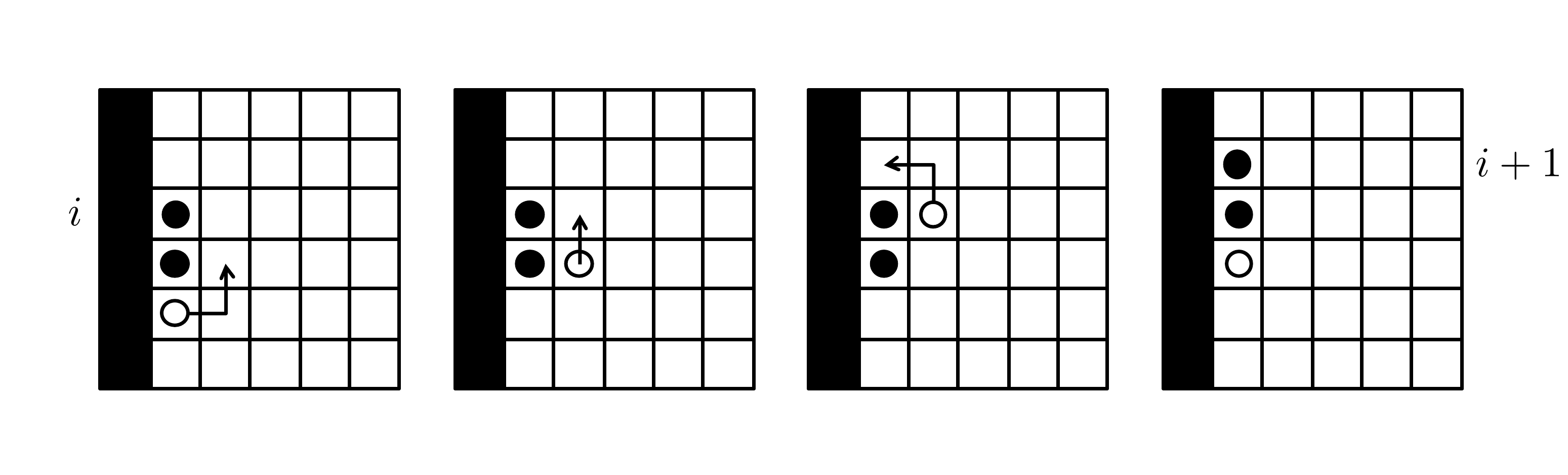}
   \caption{Move to the northwest corner
   ($S^3_4 \to S^3_6 \to S^3_2 \to S^3_4$)} 
   \label{fig:3-to-north-west-corner}
  \end{minipage}
  \\
  \begin{minipage}{0.95\hsize}
   \centering
   \includegraphics[height=2.5cm]{./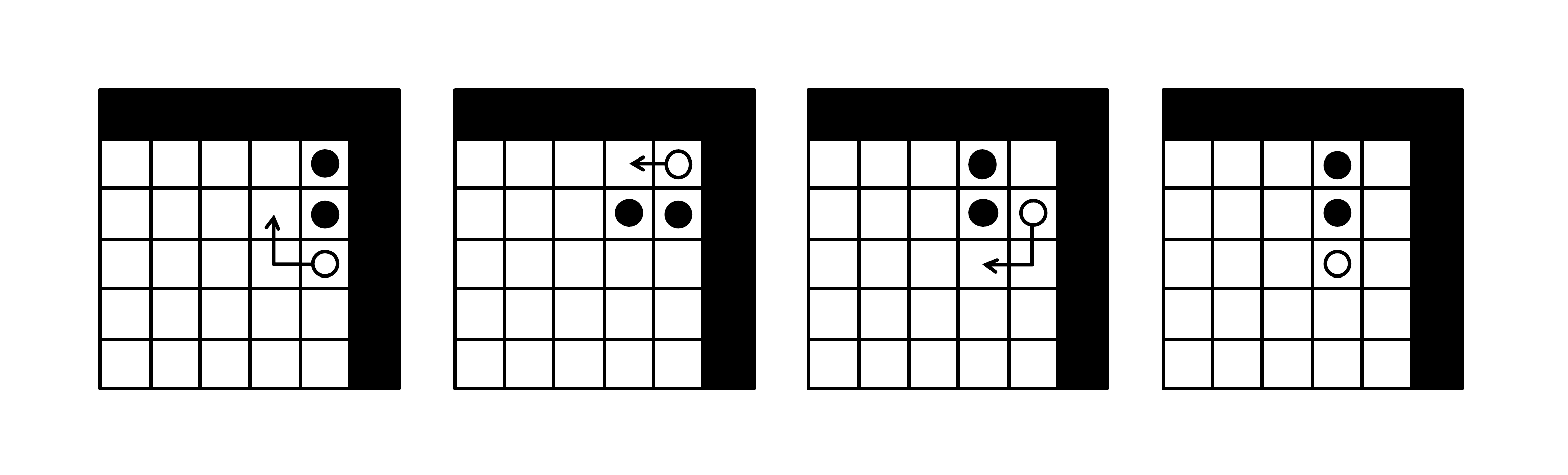}
   \caption{Turn on the northeast corner
   ($S^3_4 \to S^3_5 \to S^3_6 \to S^3_4$)} 
   \label{fig:3-north-east-corner}
  \end{minipage}
  \\
  \begin{minipage}{0.95\hsize}
   \centering
   \includegraphics[height=2.5cm]{./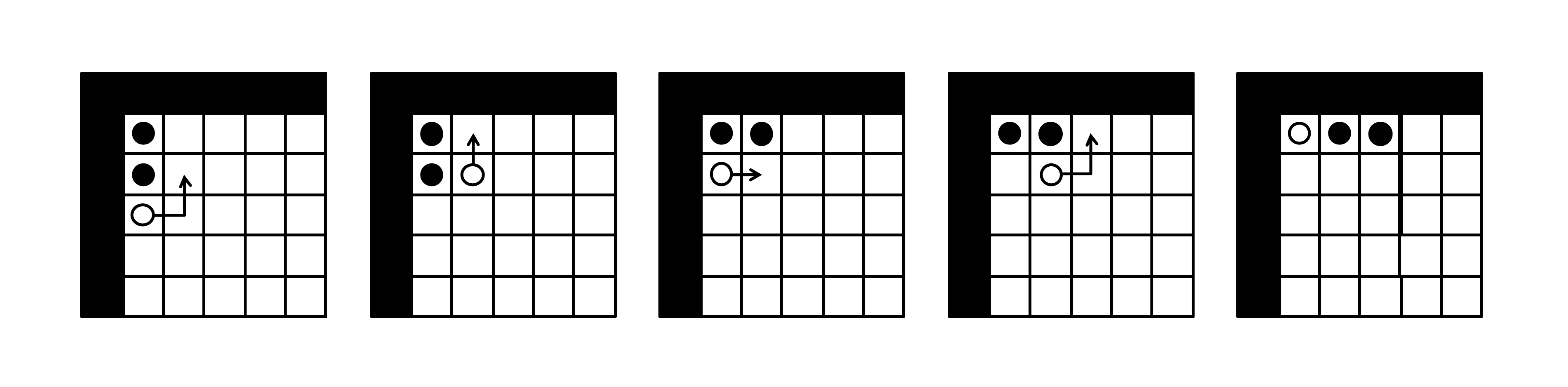}
   \caption{Turn on the northwest corner
   ($S^3_4 \to S^3_6 \to S^3_2 \to S^3_3 \to S^3_1$)} 
   \label{fig:3-north-west-corner}
  \end{minipage}
 \end{tabular}
\end{figure}

\begin{figure}[t]
 \centering
 \includegraphics[height=3cm]{./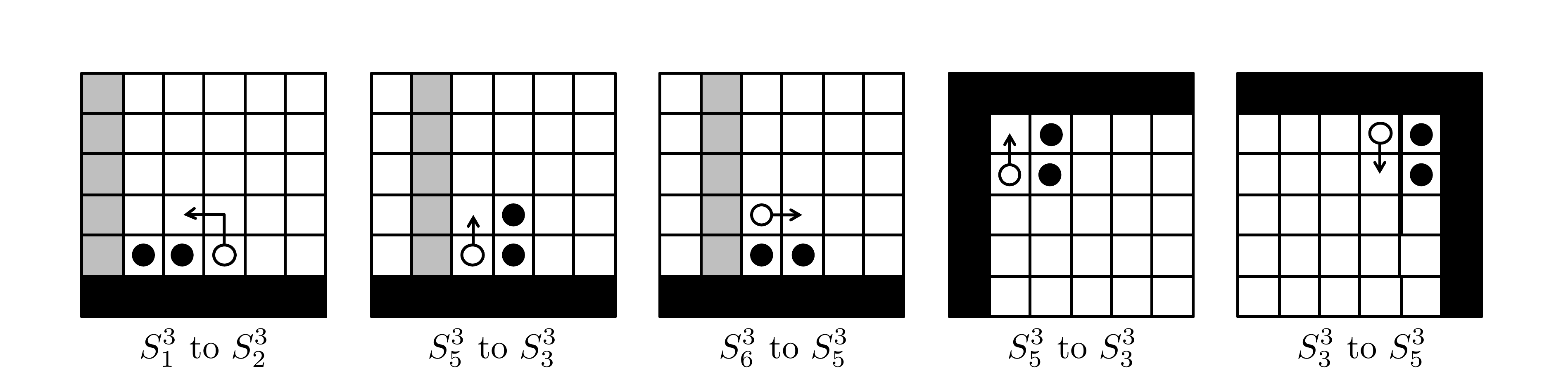}
 \caption{Additional movements. A gray column is either a wall or
 non-wall cells. } 
 \label{fig:3-exceptions}
\end{figure}

In the following, we show the proposed algorithm with a sequence of steps 
where just one module performs a rotation or sliding in each step. 
The proposed algorithm consists of ten types of basic moves. 
The point is that no pair of different moves use a state (including cells of walls )
in common. 
Hence, in each step, the modules can agree on what move they are performing and 
which module to perform a movement. 

\noindent{\bf Moves along a row.~} Figure~\ref{fig:3-to-east}  
and Figure~\ref{fig:3-to-west} show the move to the east and the move
to the west, respectively. 
By repeating one of the two moves, 
$R$ moves to one direction. 
Each module can observe the state of $R$ and
the two sets of states used in the two moves are disjoint. 
Thus, the modules can agree on the direction
to which $R$ is moving. 

In the first state of the unit move to the east,
the spine is the $i$th row and the frontier is
the $j$th column (Figure~\ref{fig:3-to-east}). 
At the end of a unit move,
the frontier reaches $(j+1)$st column.
During the move, the modules do not care whether
$(i+1)$st row, $(i-2)$nd row and $(j-3)$rd column are
walls or not. 

In the first state of the unit move to the west,
the spine is the $i$th row and the frontier is
the $j$th column (Figure~\ref{fig:3-to-west}). 
The modules do not care whether
$(i-2)$nd row, $(i+1)$st row and $(j+1)$st column are
walls or not. 

\noindent{\bf Turns on the walls.~} By repeating the above two moves, 
$R$ eventually reaches either the west wall or the east wall. 
Figure~\ref{fig:3-east-wall} 
and Figure~\ref{fig:3-west-wall} show a turn on the east wall and
a turn on the west wall, respectively. 
$R$ changes its spine and starts a new move to the opposite direction. 

In the first step of a turn on the east wall, 
the leftmost module cannot see the east wall and performs a rotation 
in the same way as the move to the east. 
After that, all modules can see the east wall and 
they perform the movements as shown in Figure~\ref{fig:3-east-wall}. 
After the turn, $R$ performs the move to the west. 

In a turn on the west wall, 
the three modules see the west wall from the first step. 
After the turn, $R$ performs the move to the east. 

\noindent{\bf Turns on the south corners and moves to the north wall.~} By
repeating the above four 
moves, $R$ eventually reaches the south wall. 
Then, it turns and moves along either the east wall or the west wall 
until it reaches the north wall. 
Figure~\ref{fig:3-south-east-corner} and
Figure~\ref{fig:3-south-west-corner} show these turns. 

In the first step of a turn on the southeast corner, 
the leftmost module cannot see the east wall and performs a rotation 
in the same way as the move to the east. 
After that, all modules can see the east wall and performs the movements 
of Figure~\ref{fig:3-south-east-corner}. 

In the first step of a turn on the southwest corner, 
the top module cannot see the south wall and performs a rotation 
in the same way as the turn on the west wall. 
After that, all modules can see the east wall and performs the movements 
of Figure~\ref{fig:3-south-west-corner}. 

Figure~\ref{fig:3-to-north-east-corner} and
Figure~\ref{fig:3-to-north-west-corner} show the moves
to the north along the vertical walls. 
When $R$ moves along the east wall (the west wall, respectively), 
the frontier is the northernmost module and 
the spine is the $(w-1)$th column (the $0$th column, respectively). 
By repeating one of the two moves, the reference point of $R$ 
eventually reaches the $(h-1)$st row. 

\noindent{\bf Turns on the north corners.~} By repeating
the move in Figure~\ref{fig:3-to-north-east-corner} or 
Figure~\ref{fig:3-to-north-west-corner}, 
$R$ eventually reaches the north wall.
Then, it turns in the corner 
and starts moving along a row 
with the moves shown in Figure~\ref{fig:3-to-east}  
and Figure~\ref{fig:3-to-west}. 
Figure~\ref{fig:3-north-east-corner} and 
Figure~\ref{fig:3-north-west-corner} shows these turns. 
In the first step of a turn on the northeast corner 
(the northwest corner, respectively), 
the bottom module cannot see the north wall and performs a rotation 
in the same way as the move to the north corner. 
After that, all modules can see the east wall and performs the movements 
of Figure~\ref{fig:3-north-east-corner} 
(Figure~\ref{fig:3-north-west-corner}, respectively). 

Figure~\ref{fig:3-exceptions} shows how to handle exceptions. 
When $R$ is in the center of the field, 
all states appear in the above moves and any move can be executed.
However, when $R$ is on a wall or in a corner,
moves for some states are not defined or impossible. 
We add exceptional movements to avoid deadlocks in these states. 

The reference point of $R$ visits all cells in each row except the
southernmost row and the northernmost row.
The cells of the southernmost row 
are visited by the modules under 
the spine when $R$ moves along the $1$st row. 
The cells of the northernmost row 
are visited by the modules over 
the spine when $R$ moves along the $(h-2)$nd row.
Each module is visited by some module and 
$R$ eventually find the target and stops thereafter. 
Consequently, we have Lemma~\ref{lemma:global-suf}. 

By Lemma~\ref{lemma:global-nec} and \ref{lemma:global-suf}, 
we obtain Theorem~\ref{theorem:global}.

\section{Search without global compass}
\label{sec:local}
  
In this section, we consider the metamorphic robotic system 
consisting of modules not equipped with the global compass. 
In this case, symmetry among the modules causes deadlocks. 
Figure~\ref{fig:sym2} shows an example with two modules. 
Each module can perform a rotation; 
however when the two modules have symmetric local compasses, 
they cannot distinguish each other and perform symmetric rotations 
simultaneously. Then, the backbone is not maintained. 
Each module does not know the compass of the other module, 
thus they cannot perform a rotation. 
Figure~\ref{fig:sym4} shows another example with rotations.
The four modules cannot move
because if one of them moves, the other three modules 
may also move. 
Figure~\ref{fig:sym5} shows an example with 
slidings.\footnote{Deadlock states with rotations 
are also illustrated in \cite{MSS19}.} 
Figure~\ref{fig:sym3} shows another type of deadlock state. 
If one endpoint module performs a rotation,
the other endpoint module also performs a rotation 
when they have symmetric local compasses with respect to their midpoint. 
The two modules perform symmetric movements simultaneously 
and the three modules cannot move forward. 
Consequently, when modules are not equipped with the global compass, 
search is generally impossible from an arbitrary
initial configuration. 

We first focus on the minimum number of modules for search 
from a predefined ``good'' initial states. 
We show the following theorem through Section~\ref{subsec:local-nec} and \ref{subsec:local-suf}. 
We show the necessity with impossibility for less than five modules
and the sufficiency with a search algorithm for five modules.

\begin{thm}
 \label{theorem:local}
 Five modules not equipped with the global compass 
 are necessary and sufficient for
 the metamorphic robotic system to
 find the target in any given field from allowed initial states. 
\end{thm}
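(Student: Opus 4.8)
The plan is to prove the two halves of Theorem~\ref{theorem:local} in the two subsections Section~\ref{subsec:local-nec} (necessity) and Section~\ref{subsec:local-suf} (sufficiency), in close analogy with the global-compass case of Section~\ref{sec:global}, but paying for the missing compass with extra modules and with shapes that are \emph{chiral} (invariant under no nontrivial rotation).

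\textbf{Necessity (impossibility for fewer than five modules).} First I would dispose of $n \le 2$: the $n=1$ case is impossible as before, and for $n=2$ the argument of Lemma~\ref{lemma:global-nec} applies verbatim, since removing the global compass cannot help the system. For $n \in \{3,4\}$ the key observation is that, without the global compass, the state of $R$ carries no direction (the modules cannot recognize a rotation of their shape), so the number of distinct states is a fixed constant $s_n$ independent of $w$ and $h$, and each state determines --- up to the common handedness --- which module moves and how. Hence, whenever $R$ is in the interior of the field, far enough from every wall that no module sees a wall within its $k$-neighborhood, its behavior is a deterministic walk on the finite state set, which is eventually periodic with period at most $s_n$; over one period the reference point of $R$ is translated by a fixed vector. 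If that vector is $\vec 0$, $R$ never leaves a bounded region and misses all far cells; otherwise $R$ travels along a straight line until it meets a wall. From there I reuse the structure of the proof of Lemma~\ref{lemma:global-nec}: the straight-line motion forces $R$ back to a wall as soon as a wall leaves its sight, so $R$ is confined to walls and corners, a bounded-branching decision diagram (as in Figure~\ref{fig:decisiondiagram}) enumerates all possible tracks, and for each branch there is an initial configuration and a target cell that the track never visits. Since we may choose the local compasses adversarially and the allowed (deadlock-free) initial states still include configurations placing $R$ anywhere in a sufficiently large field, the same target choice defeats every algorithm, so five modules are necessary even after excluding initial deadlocks.

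\textbf{Sufficiency (a search algorithm for five modules).} The plan is to mimic the boustrophedon exploration of Section~\ref{subsec:global-suf}: $R$ sweeps the field row by row, and when it reaches the south wall it climbs one side wall up to the north wall and resumes sweeping unvisited cells. The compass is replaced by a chiral reference shape: I would fix a canonical asymmetric five-module shape (an ``arrow''/``L''-type shape) that, together with the agreed clockwise orientation, has a well-defined head, tail, and left/right, and is therefore fixed by no nontrivial rotation. Using this shape one can define, exactly as in Figures~\ref{fig:3-to-east}--\ref{fig:3-north-west-corner}, a family of basic moves --- move along a row, turn on a side wall, turn on a south corner, climb a side wall, turn on a north corner --- each realized by a short sequence of single-module rotations and slidings through intermediate five-module shapes. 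I would then verify the two properties that make the algorithm correct: (i) every shape occurring in the algorithm is chiral, so in every step all five modules agree on which module acts and which move is in progress even without a compass, and (ii) no two distinct basic moves share a shape, even counting the wall/corner context, so the system is never ambiguous. Correctness of exploration then follows as in Section~\ref{subsec:global-suf}: the reference point visits every cell of every intermediate row, the extremal rows are swept by the modules lying beyond the reference point, and as soon as a module reaches the target cell the $2$-neighborhood lets all modules detect it and halt. Finally I would pin down the ``allowed initial states'' as exactly those configurations from which $R$ can reach the canonical shape without entering the symmetric deadlocks of Figures~\ref{fig:sym2}--\ref{fig:sym3}, and give the short funnelling routine that brings any such configuration to the canonical shape.

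\textbf{Main obstacle.} The hard part is the sufficiency direction, and within it the bookkeeping of chirality: I must choose the five-module shapes so that \emph{every} intermediate shape --- including the transient ones produced mid-move and the special shapes arising on walls and in corners (the analogue of Figure~\ref{fig:3-exceptions}) --- is free of symmetry, while simultaneously keeping all the move-shape sets pairwise disjoint and keeping the backbone connected throughout each single-module move. Showing that five modules suffice for this, and that four do not leave enough chiral shapes to both orient the system and encode the progress of the sweep, is where the real work lies; the necessity direction is essentially a careful reprise of the decision-diagram argument already used for two modules.
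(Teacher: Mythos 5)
Your overall decomposition (impossibility for $n\le 4$, explicit algorithm for $n=5$) matches the paper, but both halves diverge from the paper's actual arguments in ways worth flagging, and one divergence is a potential gap.

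On necessity, your reduction of the $3$- and $4$-module cases to the decision-diagram/track-coverage argument of Lemma~\ref{lemma:global-nec} mislocates the obstruction. The paper's proof is not that some wall-following track misses a cell; it is a direct adversarial-symmetry analysis over the (small, explicit) state graph. For three modules, every possible move from a line or ``L'' state either deadlocks (the adversary gives the two endpoint modules symmetric local compasses so they move simultaneously and symmetrically) or cycles back to the initial configuration with zero net displacement, so $R$ never leaves its initial neighborhood at all; for four modules, the reachability graph of the seven states shows the only progress-making loop is $S^4_3 \leftrightarrow S^4_4$, which confines $R$ to two rows because a $\pi/2$ turn is unrealizable. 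Your ``deterministic walk on a finite state set, eventually periodic with a fixed translation vector'' schema can in principle recover these conclusions (the zero-vector branch is the $3$-module case, the two-row loop is a degenerate track), but it glosses over the fact that executions without a compass are not deterministic --- the adversary chooses local compasses, and the whole content of the proof is enumerating what symmetric responses the adversary can force. The real work you defer to ``a bounded-branching decision diagram'' is exactly the state-by-state case analysis the paper carries out, and your closing remark that necessity is ``essentially a reprise'' of the two-module argument undersells this.

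On sufficiency, you propose to replicate the global-compass track: sweep south, then climb a side wall to the north and resume on unvisited cells. The paper explicitly rejects this (``it is not easy to realize all the ten moves'' of Section~\ref{subsec:global-suf} without the compass) and instead uses a single, simpler track --- sweep every row from north to south, each row west to east, and at the southwest corner rotate the entire track by $\pi/2$ so that the previously missed boundary cells are covered on the second (column-wise) pass. Your plan therefore commits to realizing roughly ten pairwise shape-disjoint chiral basic moves with only five modules, which is precisely what the paper suggests cannot be done at this module count; this is where your construction could genuinely fail. Your guiding principles (all intermediate shapes chiral, move-shape sets disjoint including wall context, handedness used to orient the system) are the right ones and match the paper, as does the exclusion of the four forbidden states $S^5_1,\dots,S^5_4$, but the track itself needs to be the rotated single-sweep one, and the paper additionally needs $4$-neighborhood visibility and $2$- and $3$-slidings to make the five-module moves work, which your sketch does not anticipate.
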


We then consider the minimum number of modules for search 
from an arbitrary initial state. 
As a basic symmetry breaking problem, 
we first consider \emph{locomotion} to one direction 
from an arbitrary initial state. 
We present deadlock states for even number of modules and 
$(4k+1)$ modules ($k=1,2,\ldots$). 
Then, we demonstrate that seven modules can start locomotion 
from an arbitrary initial state. 
We conclude that seven modules are necessary and sufficient for 
search from an arbitrary initial configuration. 

 \begin{figure}[t]
  \centering
   \begin{minipage}{0.4\hsize}
    \centering
    \includegraphics[height=2cm]{./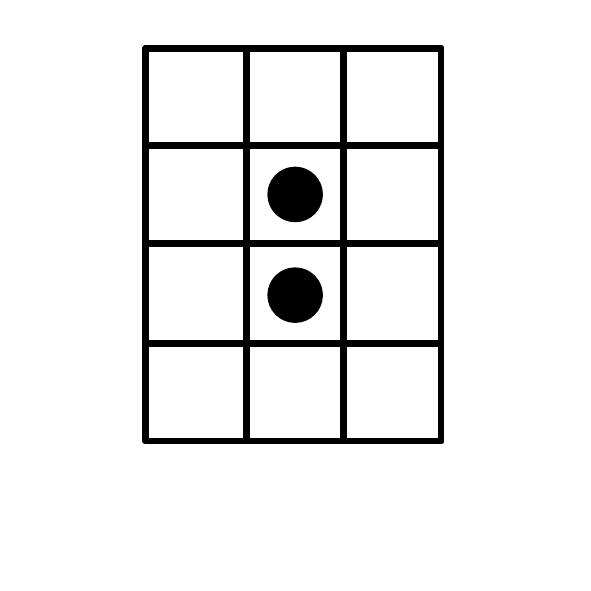}
    \caption{Two modules} 
    \label{fig:sym2}
   \end{minipage}
   \begin{minipage}{0.4\hsize}
    \centering
    \includegraphics[height=2cm]{./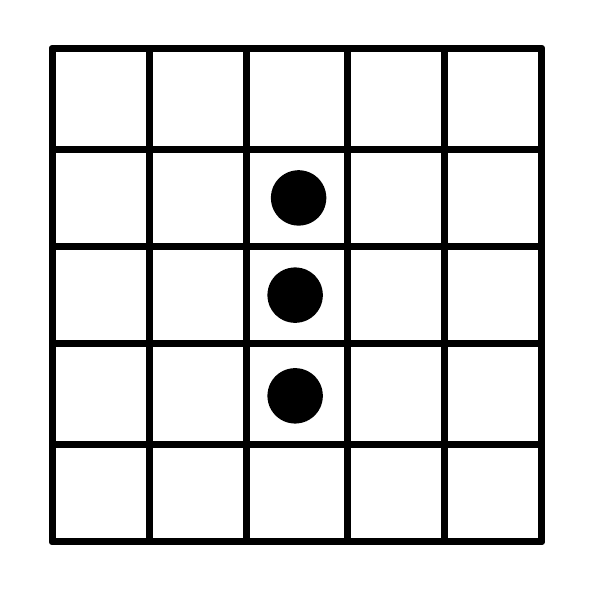}
    \caption{Three modules} 
    \label{fig:sym3}
   \end{minipage} 
\\ 
   \begin{minipage}{0.4\hsize}
    \centering
    \includegraphics[height=1.5cm]{./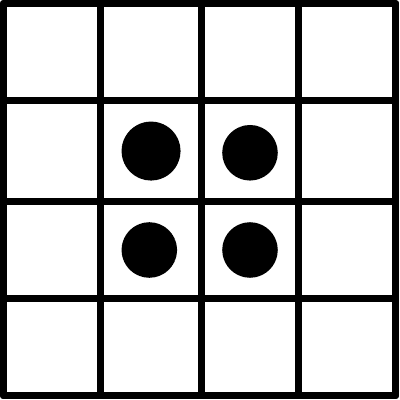}
    \caption{Four modules}
    \label{fig:sym4}
   \end{minipage}
   \begin{minipage}{0.4\hsize}
    \centering
    \includegraphics[height=2cm]{./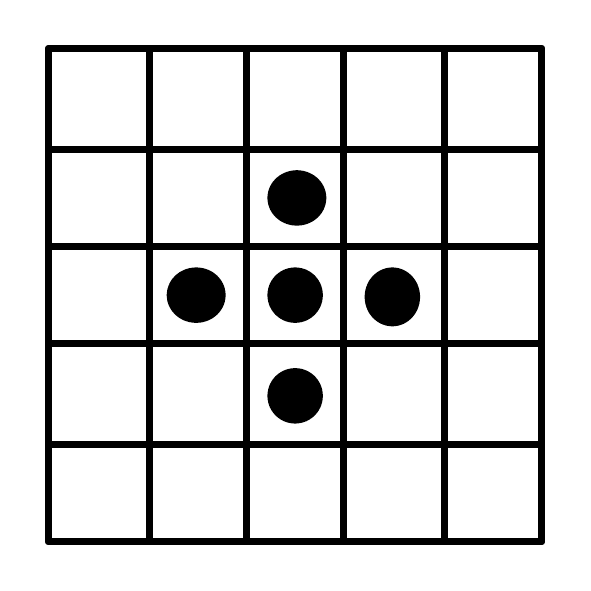}
    \caption{Five modules} 
    \label{fig:sym5}
   \end{minipage} 
 \end{figure}

\subsection{Impossibility for less than five modules} 
\label{subsec:local-nec} 

When the modules are not equipped with the global compass, 
there are two causes that make search by the metamorphic robotic system impossible. 
One is the deadlock in an initial state, 
and the other is the nonexistence of an exploration track 
as discussed in the proof of Lemma~\ref{lemma:global-nec}. 
In this section, we show the following lemma. 

\begin{lem} 
 \label{lemma:local-nec}
 Consider the metamorphic robotic system $R$ 
 consisting of less than five modules not equipped with the global compass 
 in a sufficiently large field.  
 For any deterministic algorithm $A$ and any initial state of $R$, 
 there exists a choice of the target cell 
 such that $R$ cannot find the target. 
\end{lem}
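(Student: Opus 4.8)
The plan is to split on the number of modules $n<5$, handling $n=1,2$ by a reduction to the global-compass case and $n\in\{3,4\}$ by combining the two obstructions named at the start of the section: deadlocks at symmetric shapes, and nonexistence of an exploration track.

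For $n=1$ there is no backbone, so no module ever moves; since the field is large enough to contain a cell outside every module's $k$-neighborhood, a target placed there is never found. For $n=2$ I would reduce to Lemma~\ref{lemma:global-nec}: given any purported search algorithm $A'$ for two modules without a global compass, define an algorithm $A$ for two modules \emph{with} a global compass in which each module adopts the global compass as its local compass and then runs $A'$. The unique execution of $A$ from a configuration $C_0$ is exactly the execution of $A'$ from $C_0$ under the legal local-compass assignment in which every module's compass equals the global one; since a correct $A'$ must succeed in \emph{every} execution from $C_0$, the algorithm $A$ would be a correct search algorithm for two modules with a global compass, contradicting Lemma~\ref{lemma:global-nec}. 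Hence some target defeats $A'$ from every initial state.

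For $n\in\{3,4\}$ I would first dispose of the shapes with a nontrivial rotational symmetry — the straight line of $n$ modules (a half-turn about its midpoint) and, for $n=4$, also the $2\times 2$ square (a quarter-turn), cf.\ Figures~\ref{fig:sym3}, \ref{fig:sym4}, \ref{fig:sym5}. Place such a shape in the interior of the field and let the adversary choose the modules' local compasses so that modules in symmetric positions have identical local views. Then in every step the algorithm prescribes the \emph{same} move to all modules of an orbit, so $R$ can only perform symmetry-preserving moves; one checks that such moves either are impossible (deadlock) or merely permute $R$ within a region of size bounded by a constant independent of $w$ and $h$. For instance, the straight line can only rotate both endpoints about its midpoint, which returns a straight line through the same midpoint, so its reference point stays forever in the five cells around that midpoint. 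Placing the target outside this region finishes these initial states.

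The remaining, and hardest, case is $n\in\{3,4\}$ from an asymmetric initial shape, where the system can translate and must be ruled out by an exploration-track argument extending the proof of Lemma~\ref{lemma:global-nec}. The key quantitative fact is that the number of shapes of $n\le 4$ modules — and hence of ``macro-states'' (shape, orientation, and labelling of which module sits where) — is a constant independent of $w$ and $h$. While $R$ is far from the walls the sequence of macro-states is therefore eventually periodic, so its reference point either cycles in a bounded region (again confined) or drifts in a fixed direction until it meets a wall; repeating the reasoning at the walls and then at the corners (again finitely many wall- and corner-macro-states) shows that the whole trajectory is a billiard-like path of bounded descriptive complexity, governed by a finite decision diagram exactly as in Figure~\ref{fig:decisiondiagram}. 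For any such path and any fixed visibility radius $k$, a sufficiently large field with a suitable initial placement leaves some cell neither visited nor ever observed, analogously to Figures~\ref{fig:2-track-a} and~\ref{fig:2-track-b}, and the target goes there. I expect the main obstacle to be precisely this last step: enumerating the wall- and corner-interactions for three and four modules and checking that every resulting track misses a cell, while arranging that the adversary's \emph{single} choice of local compasses also triggers the deadlock or in-place spin of the previous paragraph whenever the trajectory would pass through a rotationally symmetric configuration, so that one coherent bad execution handles every phase at once.
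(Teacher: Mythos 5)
Your cases $n=1$ and $n=2$ are fine, and for $n=2$ your reduction to Lemma~\ref{lemma:global-nec} (run the compass-free algorithm with every local compass set equal to the global one; correctness over all executions would yield a correct compass-equipped algorithm) is a valid and genuinely different route: the paper instead argues directly that two compass-less modules with adversarially symmetric compasses mirror each other's rotations and break the backbone, so they cannot move at all. Your treatment of the rotationally symmetric shapes (the line, and the $2\times 2$ square for $n=4$) also matches the paper's $S^4_1$/line-state analysis.

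The gap is in the asymmetric $n\in\{3,4\}$ case, which you defer to a generic ``eventually periodic macro-state, hence billiard-like track, hence some cell is missed'' argument and explicitly leave the enumeration undone. That enumeration \emph{is} the proof, and the generic claim does not substitute for it: a boustrophedon sweep also has bounded descriptive complexity and misses nothing, so ``the track is governed by a finite decision diagram'' does not by itself yield an unvisited cell. What actually has to be shown -- and what the paper shows -- is structural: for $n=3$ the L-shape admits \emph{no} net translation at all, because each of the five available moves either produces a line state (which deadlocks under symmetric compasses) or a rotated copy of the L; since a compass-free algorithm cannot distinguish a state from its rotations, the same move recurs and the system returns to its initial cells after four steps, so there is no drift phase and no wall/corner analysis to do. For $n=4$ one must build the reachability diagram over the seven one-sided tetromino states and observe that the only translating cycle ($S^4_3\leftrightarrow S^4_4$) cannot change its spine, so $R$ is confined to a two-row strip even after reflecting at walls. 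Your framework assumes the system ``can translate and must be ruled out by an exploration-track argument,'' which is the wrong obstruction for $n=3$ and, for $n=4$, replaces the decisive two-row-confinement fact with an unproven assertion that every track misses a cell. Until the state-by-state analysis is carried out, the core of the lemma is not established.
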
 
\begin{proof}
We assume that each module can observe its $k$ neighborhood
where $k$ is a constant with respect to the size of the field. 
We have the following four cases.  

\noindent{\bf Case A: One module.~}
The single module cannot move because there is no backbone. 

\noindent{\bf Case B: Two modules.~}
There is a single initial connected state for the two modules.
Possible moves from the initial state are rotations, 
however if one module performs a rotation, 
the other module performs a symmetric rotation in the worst case. 
The two modules cannot maintain a backbone 
and they cannot move.  
 
\noindent{\bf Case C: Three modules.~} 
Consider the metamorphic robotic system $R^3$ consisting of 
three modules not equipped with the global compass. 
We can classify initial states into two types; 
the line states and the ``L'' states. 

\begin{figure}[t]
 \centering
 \begin{tabular}{c}

  \begin{minipage}{0.95\hsize}
   \centering
   \includegraphics[height=3cm]{./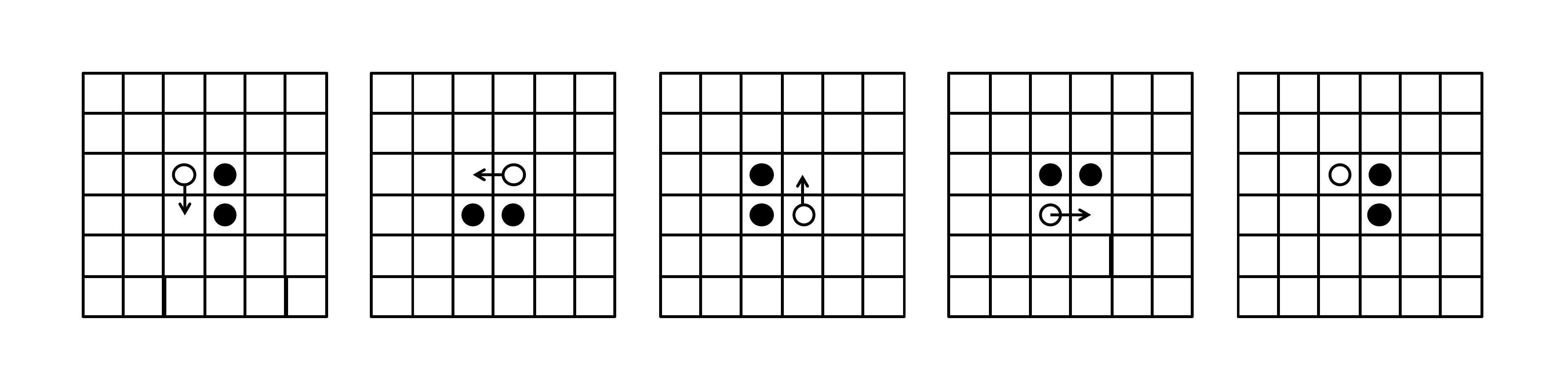}
   \caption{Sliding of the left module.}
   \label{fig:3-without-directions-left}
  \end{minipage}
  \\ 
  \begin{minipage}{0.95\hsize}
   \centering
   \includegraphics[height=3cm]{./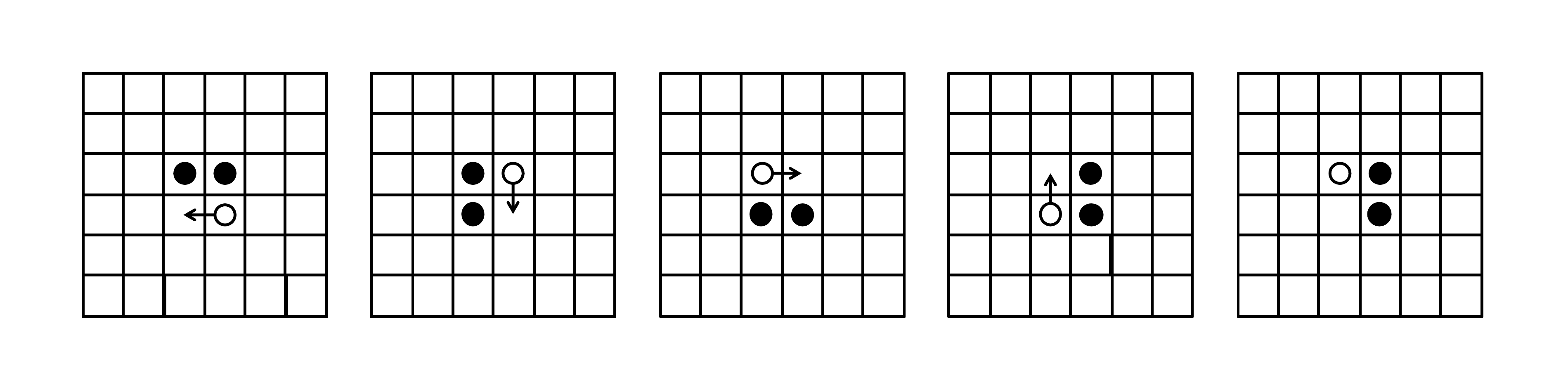}
   \caption{Sliding of the right module.}
   \label{fig:3-without-directions-right} 
  \end{minipage}
 \end{tabular}
\end{figure}

In a line state, the possible movements are rotations 
of endpoint modules. 
When one endpoint module performs a rotation, 
the other endpoint module also performs a rotation in the worst case. 
The new state is also a line, 
and the two endpoint modules repeat the same rotation. 
$R^3$ does not move forward to any direction. 
Thus, the line state should be avoided during the search procedure. 

In an ``L'' state, the possible movements are rotations
of the two endpoint modules. 
For the simplicity of the notation, we consider possible movements
from the first state of Figure~\ref{fig:3-without-directions-left} and 
Figure~\ref{fig:3-without-directions-right}. 

We call the left (upper) endpoint module \emph{the left module} and
the right (lower) endpoint module \emph{the right module}. 
Since the modules share the handedness,
they can agree on the left module and the right module. 
We have the following five cases. 

\noindent{Case C(i): The left module performs a rotation.~} 
The possible rotation is the clockwise rotation,
and the resulting state is the (vertical) line state. 
From the line state, $R$ cannot move forward to any direction 
in the worst case. 

\noindent{Case C(ii): The right module performs a rotation.~} 
The possible rotation is the counter-clockwise rotation,
and the resulting state is the (horizontal) line state. 
From the line state, $R$ cannot move forward to any direction 
in the worst case. 

\noindent{Case C(iii): Two endpoint modules perform rotations.~} 
When the both endpoint modules perform rotations, 
the resulting configuration is the L shape. 
Again, the two modules perform rotations,
and the resulting state is the initial state. 
$R$ cannot move forward to any direction. 

\noindent{Case C(iv): The left module performs a sliding.~} 
The possible sliding is the sliding to the south and the resulting state 
is again the L shape, where the other module becomes the new left 
module and performs a sliding.
Figure~\ref{fig:3-without-directions-left} shows the sequence 
of sliding movements, and after four steps, 
the configuration returns to the initial configuration. 
 
\noindent{Case C(v): The right module performs a slide.~} 
In the same way as Case C(iv), 
the configuration returns to the initial configuration. 
Figure~\ref{fig:3-without-directions-right} shows this case. 

Consequently, three modules cannot move forward to any direction 
and they cannot find a target that is out of the initial 
visibility ranges.

\begin{figure}[t]
 \centering
 \begin{tabular}{c}
  \begin{minipage}{0.95\hsize}
   \centering
   \includegraphics[height=6cm]{./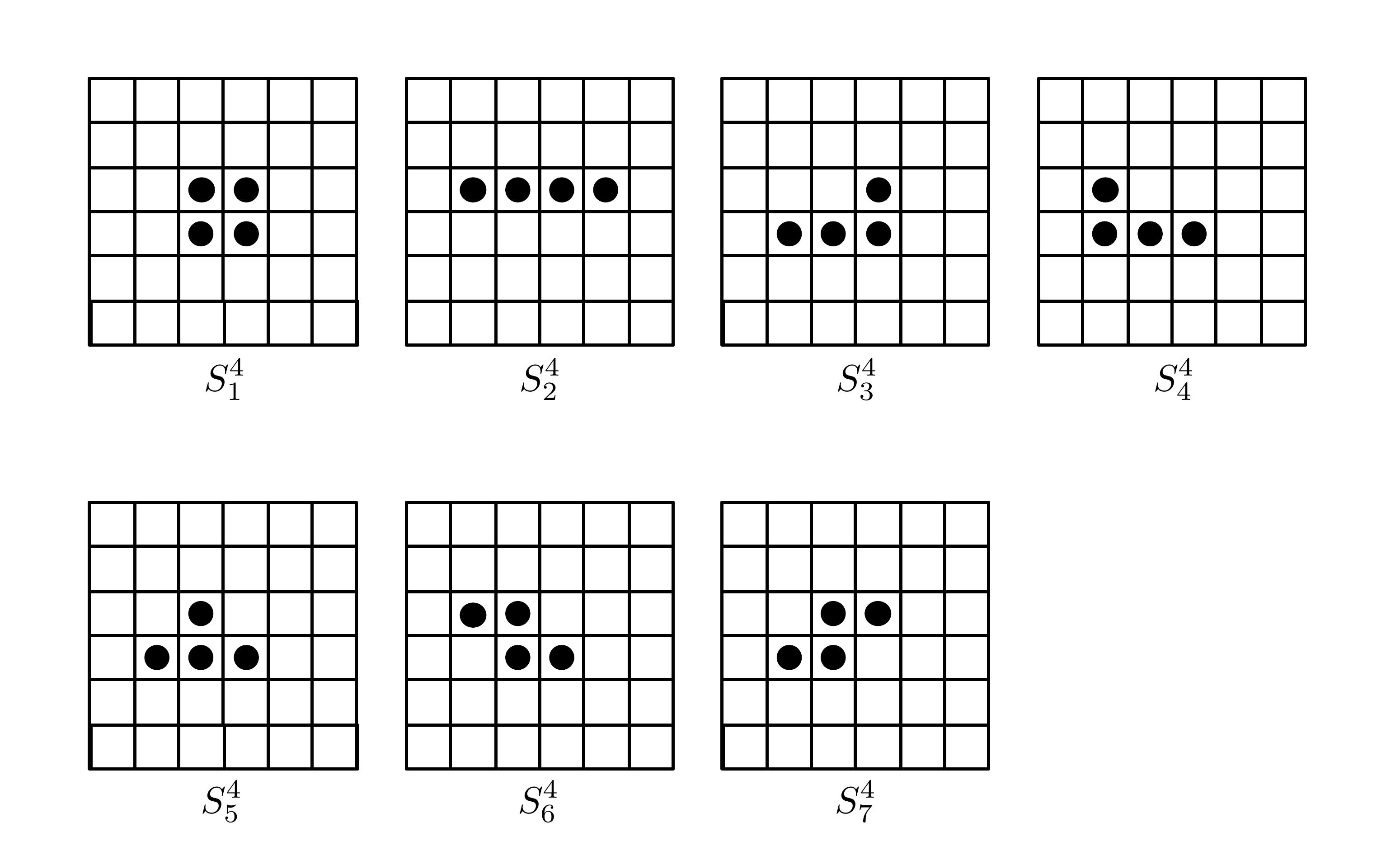}
   \caption{States of the metamorphic robotic system consisting of four modules.}
   \label{fig:4-states}
   \end{minipage}
  \\ 
  \\ 
  \begin{minipage}{0.95\hsize}
   \centering
   \includegraphics[height=4cm]{./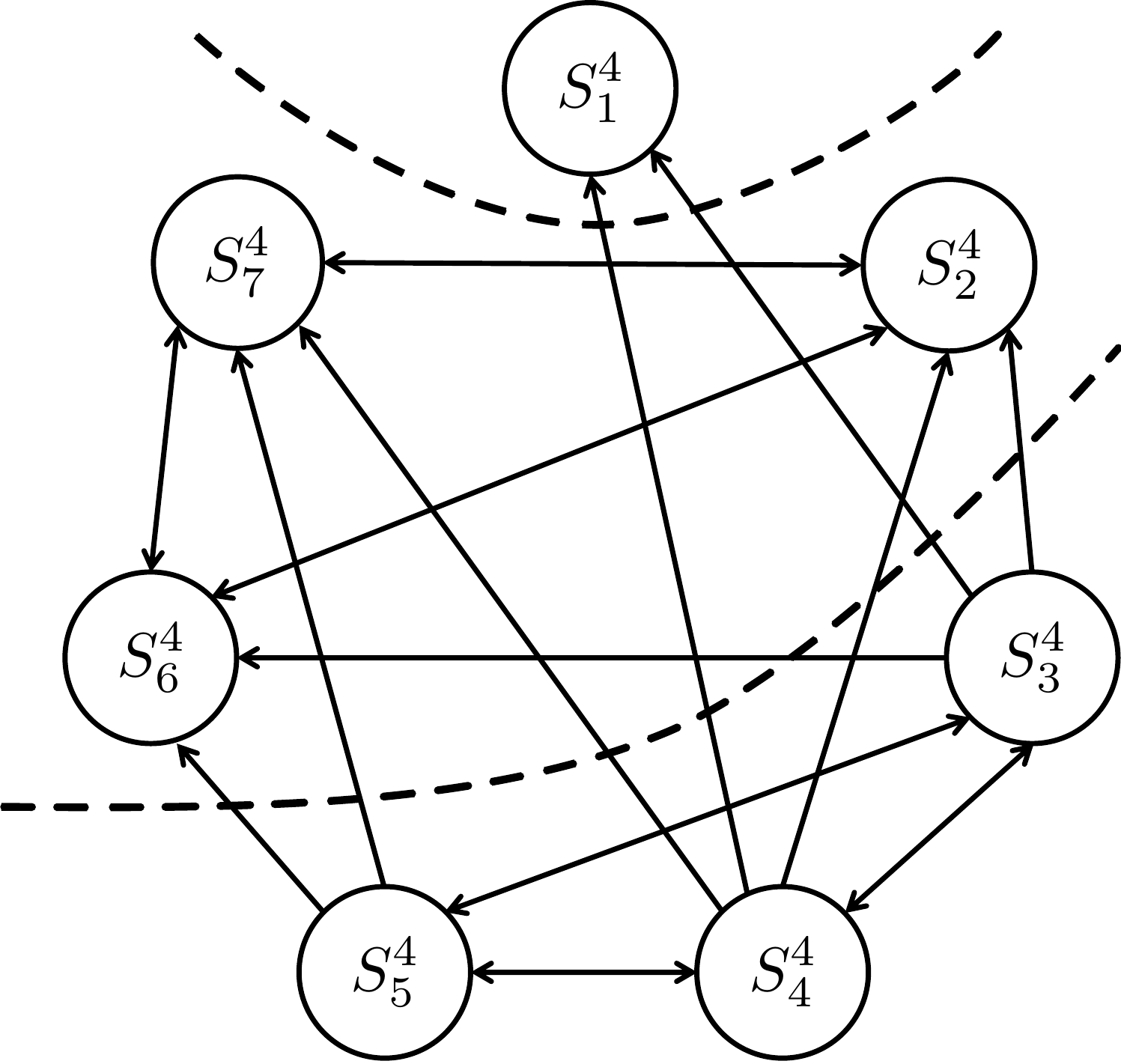}
   \caption{Reachability of the states}
   \label{fig:4-reachability}
  \end{minipage}
 \end{tabular}
\end{figure}

\noindent{\bf Case D: Four modules.~} 
Consider the metamorphic robotic system $R^4$ consisting of 
four modules not equipped with the global compass. 
Figure~\ref{fig:4-states} shows all possible states of $R^4$. 
There are seven states and Figure~\ref{fig:4-reachability}
shows the reachability of these states. 
An arc shows that there is a set of moves
that translates the state of its tail 
to the state of its head.
For example, $S^4_3$ is translated to $S^4_6$ 
by the rotation of the left module. 
State $S^4_6$ cannot be translated to $S^4_3$ because 
the two endpoint modules perform symmetric movements in the worst case. 
  
In $S^4_1$, no module can move because of symmetry,
and $S^4_2$, $S^4_6$, and $S^4_7$ form a loop
that has no outgoing arc. 
There are possible sequence of states consisting
of these three states (e.g.,  $S^4_2$, $S^4_6$, $S^4_7$, $S^4_2$); 
however, $R^4$ cannot move forward to any direction.
The remaining three states are $S^4_3$, $S^4_4$, and $S^4_5$. 
By repeating $S^4_3$ and $S^4_4$, $R^4$ can move
to one direction in the same way as Figure~\ref{fig:5-to-west}; 
however, neither the transition from $S^4_3$ to $S^4_5$ nor
$S^4_4$ to $S^4_5$ move $R^4$ forward.
The only way to move $R^4$ is to 
repeat $S^4_3$ and $S^4_4$. 
By turning $R^4$ upside-down,
it can move both directions; however, this just results in a loop
in two rows.
Hence, the four modules cannot find a target 
outside of the loop. 

Consequently, less than five modules cannot find a target 
even when we can choose an initial state of the metamorphic robotic system. 
\qed
\end{proof}

\subsection{Search algorithm for five modules} 
\label{subsec:local-suf}

In this section, we consider the metamorphic robotic system $R$ 
consisting of five modules not equipped with the global compass. 
Figure~\ref{fig:5-forbidden} shows the unique deadlock state 
$S^5_1$ and related three states $S^5_2$, $S^5_3$, and $S^5_4$ 
from which $R$ may transit to the deadlock state. 
Figure~\ref{fig:5-cycle} shows the transition diagram 
of these four states. 
Each arc represents the fact that there are possible movements 
that translates its starting state to its endpoint state. 
For example, in $S^5_4$ possible movements are
rotations of the two endpoint modules. 
However, when one of them moves,
the other may also move. 
Then, possible next states are $S^5_2$ and $S^5_3$. 
When two endpoint modules move in $S^5_2$ or $S^5_3$, 
possible next states are
$S^5_1$ (by $1$-slidings), $S^5_2$ (by $2$-slidings),
$S^5_3$ (by $2$-slidings), and $S^5_4$ (by rotations). 
However, $S^5_1$ cannot be translated to an asymmetric state 
in the worst case due to its symmetry. 
During these transitions, $R$ cannot move forward to any
direction. 
Consequently, these four states cannot be used in a search algorithm. 

\begin{figure}[t] 
 \centering
 \begin{tabular}{c}

  \begin{minipage}{0.95\hsize}
   \centering 
   \includegraphics[height=2.1cm]{./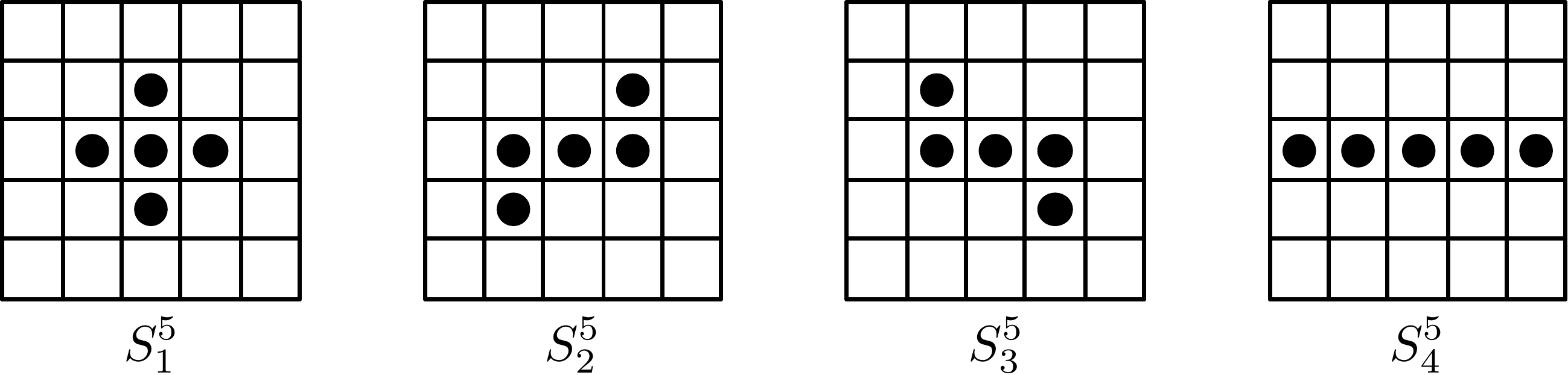}
   \caption{Forbidden states. }
   \label{fig:5-forbidden}
  \end{minipage}
  \\ 
  \\ 
  \begin{minipage}{0.95\hsize}
   \centering 
   \includegraphics[height=2.5cm]{./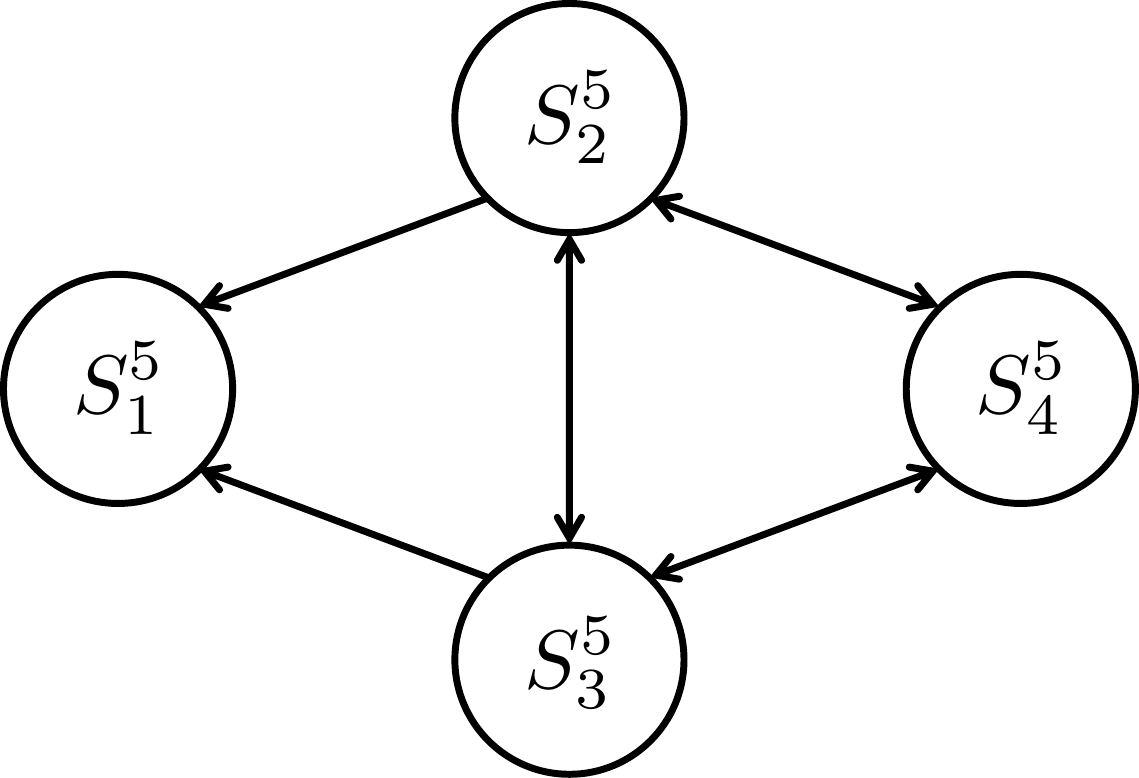}
   \caption{Cycle of the forbidden states. }
   \label{fig:5-cycle}
  \end{minipage}
 \end{tabular}
\end{figure} 

\begin{figure}[t]
 \centering
 \includegraphics[height=12cm]{./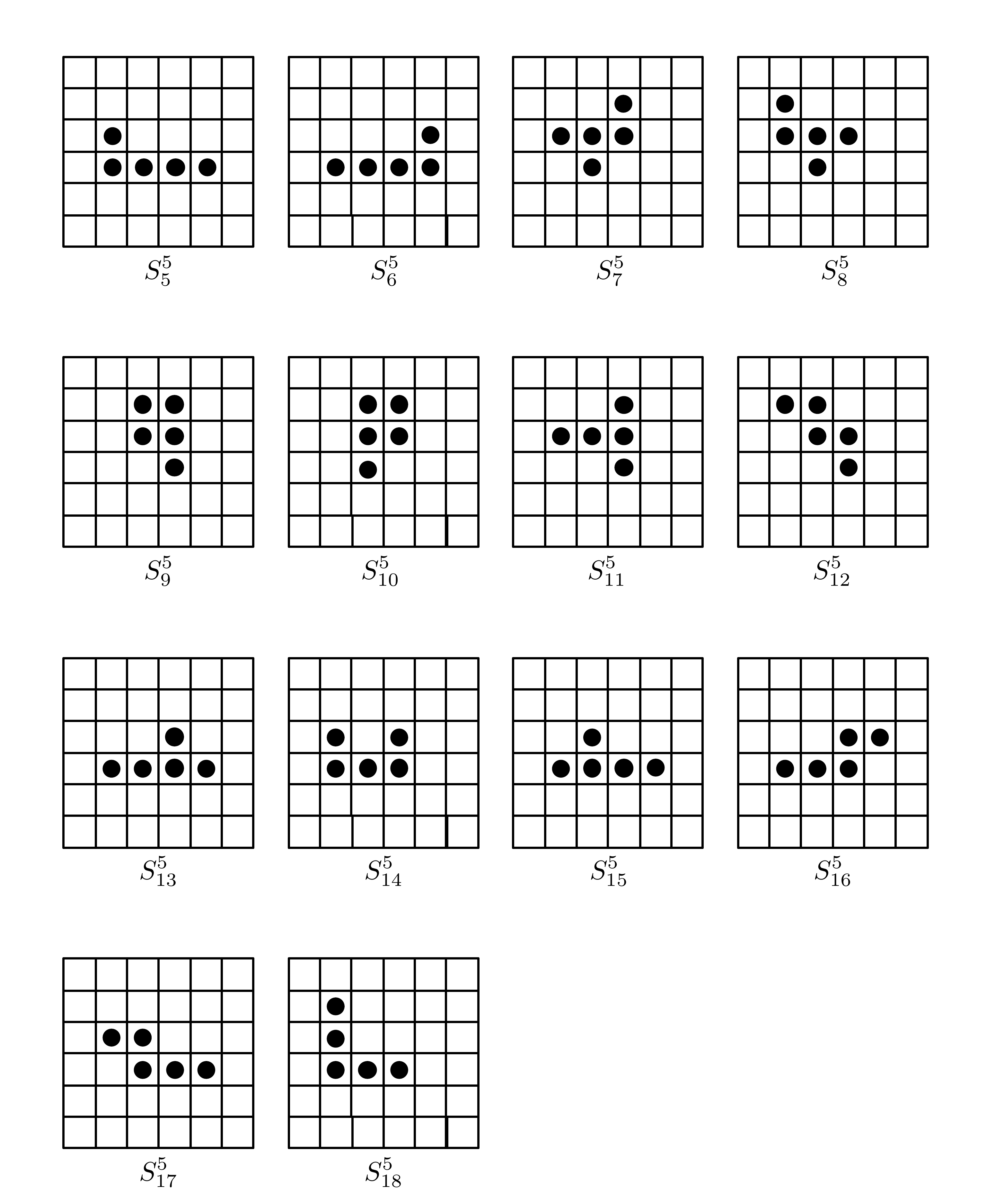}
 \caption{States of $R$ consisting of five modules} 
 \label{fig:5-states}
\end{figure}

\begin{figure}[!htbp]
      \centering 
 \includegraphics[height=2.6cm]{./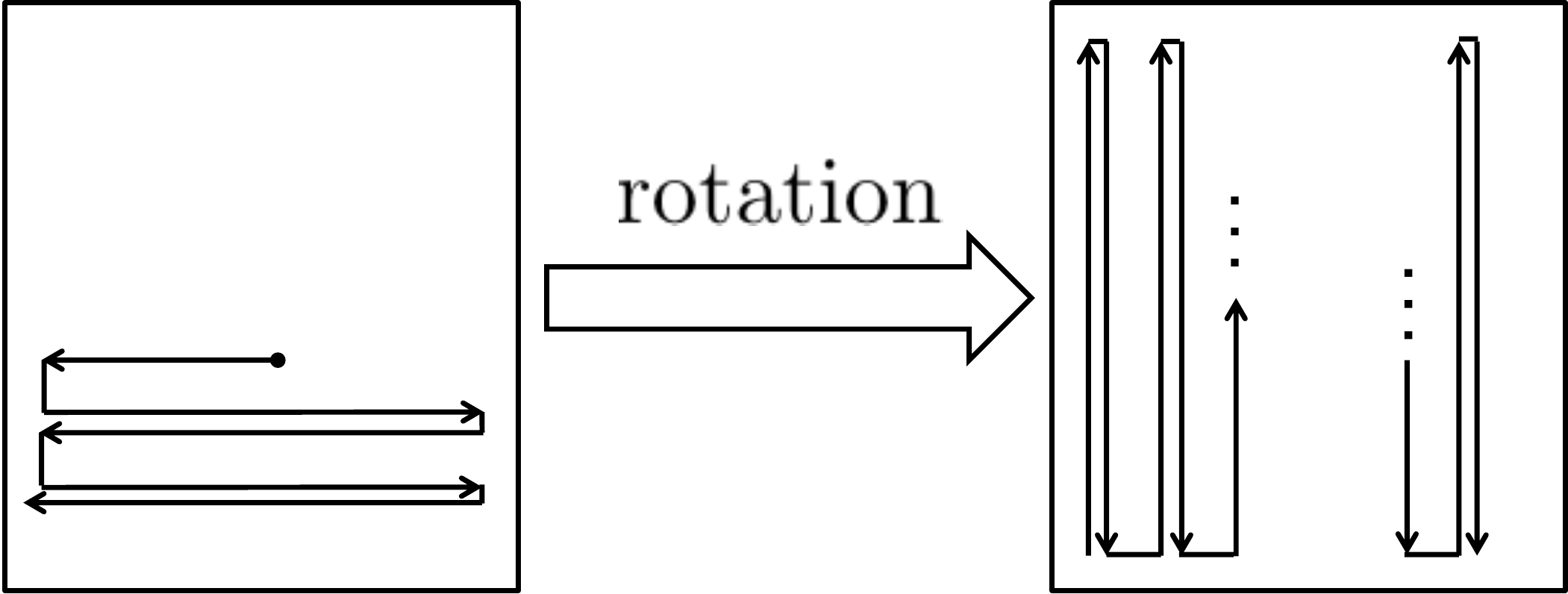}
 \caption{Exploration of the metamorphic
 robotic system consisting of five modules. } 
 \label{fig:5-tracks}
\end{figure}

In the following, we consider initial configurations where
the state of $R$ is none of the four states. 
Figure~\ref{fig:5-states} shows all the remaining states of $R$. 
Note that since the modules are not quipped with the global compass, 
they cannot recognize any rotation of a state. 
We assume that each module can observe the cells in its
$4$-neighborhood.
In addition, we use $2$-slidings and $3$-slidings,
which are not used in Section~\ref{subsec:global-suf}. 
To prove the following lemma, 
we present a search algorithm. 

\begin{lem}
 \label{lemma:local-suf}
 Five modules not equipped with the global compass 
 are sufficient for
 the metamorphic robotic system to find a target 
 in any given field from allowed initial states. 
\end{lem}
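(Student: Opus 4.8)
The plan is to mirror the structure of the proof of Lemma~\ref{lemma:global-suf}: exhibit an explicit exploration algorithm for five modules that makes the metamorphic robotic system sweep the field row by row, and argue that it visits every cell of the field, hence eventually reaches the target cell and halts. As in the global-compass case, the algorithm will be described as a collection of \emph{basic moves} --- a move to the east, a move to the west, turns on each of the four walls, turns on each of the four corners, and moves to the north wall along the east wall and along the west wall --- each realized as a short sequence of configurations in which exactly one module performs a rotation or a $k$-sliding with $k \le 3$. The crucial design invariant, exactly as before, is that the state sets used by distinct basic moves are pairwise disjoint, where a ``state'' also records which nearby cells are walls, information observable within the $4$-neighborhood; this lets the five anonymous modules agree, in every step, on which basic move is in progress and which single module must act next, so that no symmetric pair of modules is ever simultaneously eligible to move. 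The forbidden configurations $S^5_1,\dots,S^5_4$ of Figure~\ref{fig:5-forbidden} and Figure~\ref{fig:5-cycle}, together with the line state, are never entered by the algorithm, which is exactly why restricting to the allowed initial states is what the lemma claims.

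Concretely, I would first fix the five-module analogue of the ``spine and frontier'' reference point together with the nominal track of Figure~\ref{fig:5-tracks}: from an arbitrary allowed initial state the system drives south sweeping each row, and upon reaching the south wall it climbs to the north wall along the east or the west wall while mopping up the cells it missed, just as in Figure~\ref{fig:3-tracks} but adapted to five modules. Second, I would verify each basic move in turn, using the state inventory of Figure~\ref{fig:5-states}, checking that (a) it is executable whenever its triggering state occurs; (b) it preserves connectivity of the configuration and of the backbone and keeps the trajectories non-interfering, i.e. the three conditions of the model; and (c) without the global compass every intermediate configuration still determines a unique next action regardless of the modules' local compasses, because the agreed handedness pins down ``left'' versus ``right'' and move-disjointness pins down the rest. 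Third, I would treat the exceptional situations --- when the system sits on a wall or in a corner so that some state of a generic move cannot occur or its prescribed action is blocked --- by extra movements in the spirit of Figure~\ref{fig:3-exceptions}, again chosen so that all state sets remain disjoint. Finally a covering argument finishes the proof: the reference point visits every cell of every row strictly between the second and the second-to-last row, the bottom row is swept by the modules trailing below the spine during the sweep of row $1$, the top row by the modules above the spine during the sweep of row $h-2$, and the vertical-wall climbs cover the remaining boundary cells; hence some module reaches the target cell, at which point the $4$-neighborhood visibility lets the whole system detect it and freeze, giving Lemma~\ref{lemma:local-suf}.

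The main obstacle I expect is item (c) in combination with the exception handling. With the global compass the execution from $C_0$ is unique, so one only has to check that each move is well defined; without it one must rule out, for \emph{every} state the algorithm can produce, the possibility that two modules related by a symmetry of that state are both instructed to move, since otherwise the backbone breaks exactly as in Figures~\ref{fig:sym2}--\ref{fig:sym5}. This forces the basic moves to be chosen so that every reachable state is rigid, in the sense of having no nontrivial symmetry that swaps two movable modules, and the corner and wall exceptions are precisely the places where a naive choice would produce such a symmetric state; obtaining a consistent, globally disjoint assignment there while still covering the whole field is the delicate part. By contrast, the arithmetic that the eight boundary and corner moves compose into the closed tracks of Figure~\ref{fig:5-tracks} for every $w$ and $h$ is, as in Lemma~\ref{lemma:global-suf}, routine once the moves have been fixed, and I would only sketch it.
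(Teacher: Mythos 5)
Your overall strategy---exploration by a sweep realized through pairwise move-disjoint basic moves, with rigidity of every reachable state replacing the total order that the global compass provided---is the right one, and you correctly isolate the hard point. But there is a genuine gap: the entire content of this lemma is the explicit construction, namely which states of Figure~\ref{fig:5-states} are used, which single module acts in each of them, and a verification that every configuration that can arise (together with its visible wall pattern) is asymmetric and belongs to exactly one basic move. Your proposal defers all of this (``I would verify each basic move in turn''), so nothing is actually established; in particular you never show that enough pairwise-disjoint, rigid intermediate states exist once the forbidden states $S^5_1,\dots,S^5_4$ of Figure~\ref{fig:5-forbidden} are excluded, which is precisely the delicate combinatorial fact the lemma rests on.

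Moreover, the specific track you commit to---all ten basic moves of the three-module algorithm, including climbs to the north wall along a vertical wall and turns at all four corners---is exactly what the paper says cannot easily be realized without the compass. The paper's construction uses only five moves (move east, move west, a turn on the east wall, a turn on the west wall, and a turn on the southwest corner, Figures~\ref{fig:5-to-east}--\ref{fig:5-east-south-corner}), supplemented by $2$- and $3$-slidings and $4$-neighborhood visibility, and it covers the field differently: the $0$th row is swept by the modules below the spine during the pass along row $1$, and the turn at the southwest corner then rotates the \emph{entire track} by $\pi/2$ so that the system re-sweeps the field column by column, which is how the $0$th column and the remaining cells get visited. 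Your covering argument, which invokes a sweep of row $h-2$ and vertical-wall climbs as in Lemma~\ref{lemma:global-suf}, is carried over from the three-module case and does not correspond to any move set you have actually defined; to complete the proof along your route you would have to exhibit compass-free realizations of all ten moves with globally disjoint state sets, which is a strictly harder (and unverified) construction than the one the paper gives.
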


We adopt search by exploration in the same way as Section~\ref{subsec:global-suf}. 
However, the modules cannot use the global compass, and 
it is not easy to realize all the ten 
moves in Section~\ref{subsec:global-suf}.
Instead, $R$ uses a single track that checks
the rows from north to south 
with visiting each cell of a row from west to east 
(Figure~\ref{fig:5-tracks}). 
$R$ rotates the track by $\pi/2$ at the southwest corner 
in order to visit all cells.
It repeats the moves until it finds the target.
We explain the basic case where the directions 
are identical to the global compass. 

\begin{figure}[!htbp]
 \centering
 \begin{tabular}{c}

  \begin{minipage}{0.95\hsize}
   \centering
   \includegraphics[height=2.5cm]{./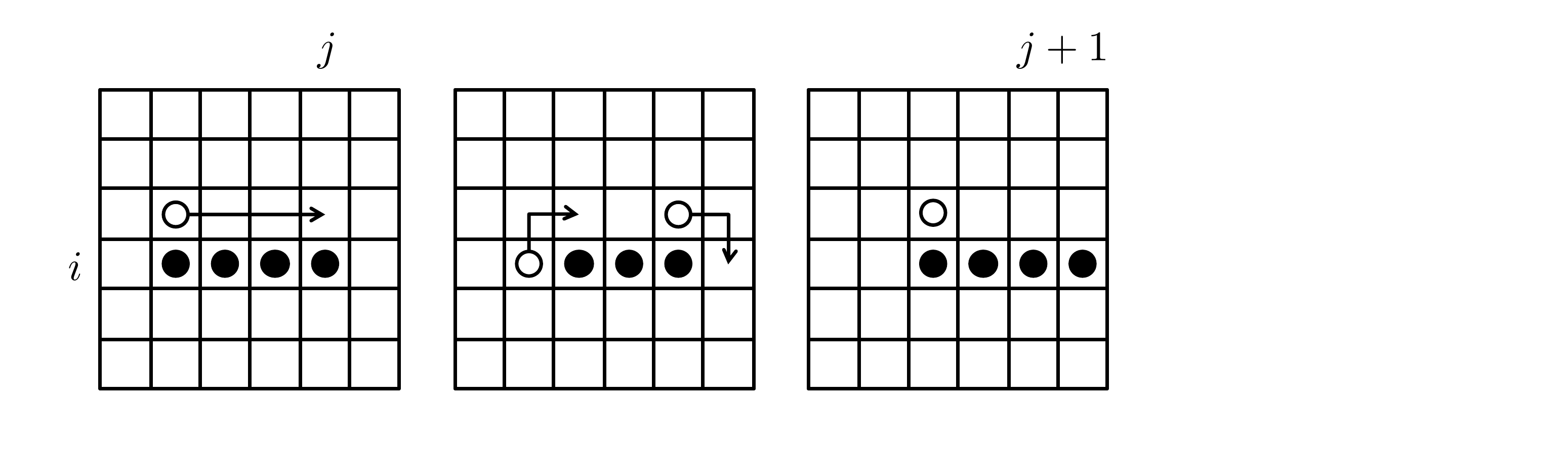}
   \caption{Move to the east ($S^5_5 \to S^5_6 \to S^5_5$)}
   \label{fig:5-to-east}
  \end{minipage}
  \\
    \begin{minipage}{0.95\hsize}
     \centering
     \includegraphics[height=2.5cm]{./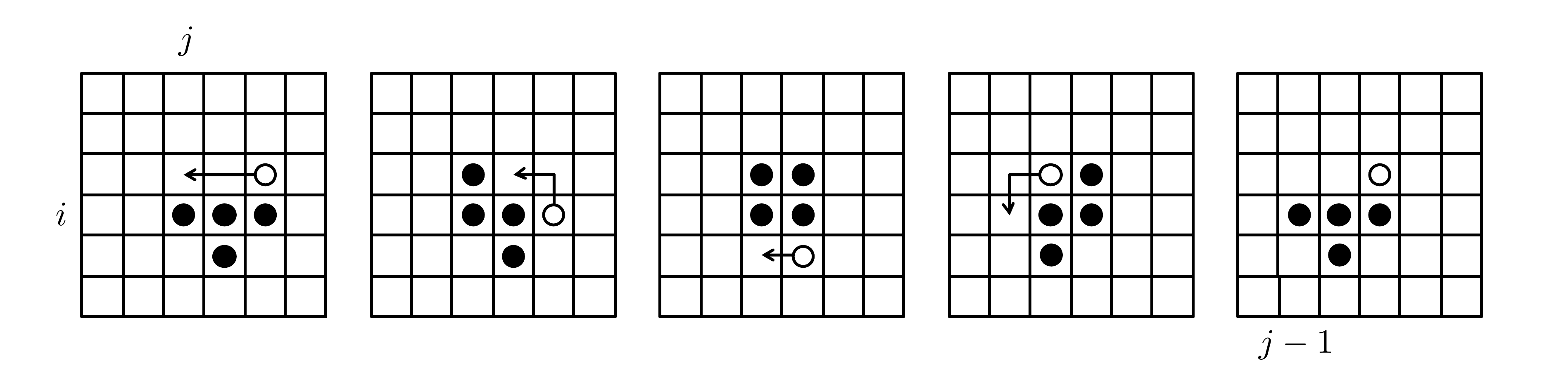}
     \caption{Move to the west
     ($S^5_7 \to S^5_8 \to S^5_9 \to S^5_{10} \to S^5_7$)}
     \label{fig:5-to-west}
    \end{minipage}
  \\
  \begin{minipage}{0.95\hsize}
   \centering
   \includegraphics[height=5cm]{./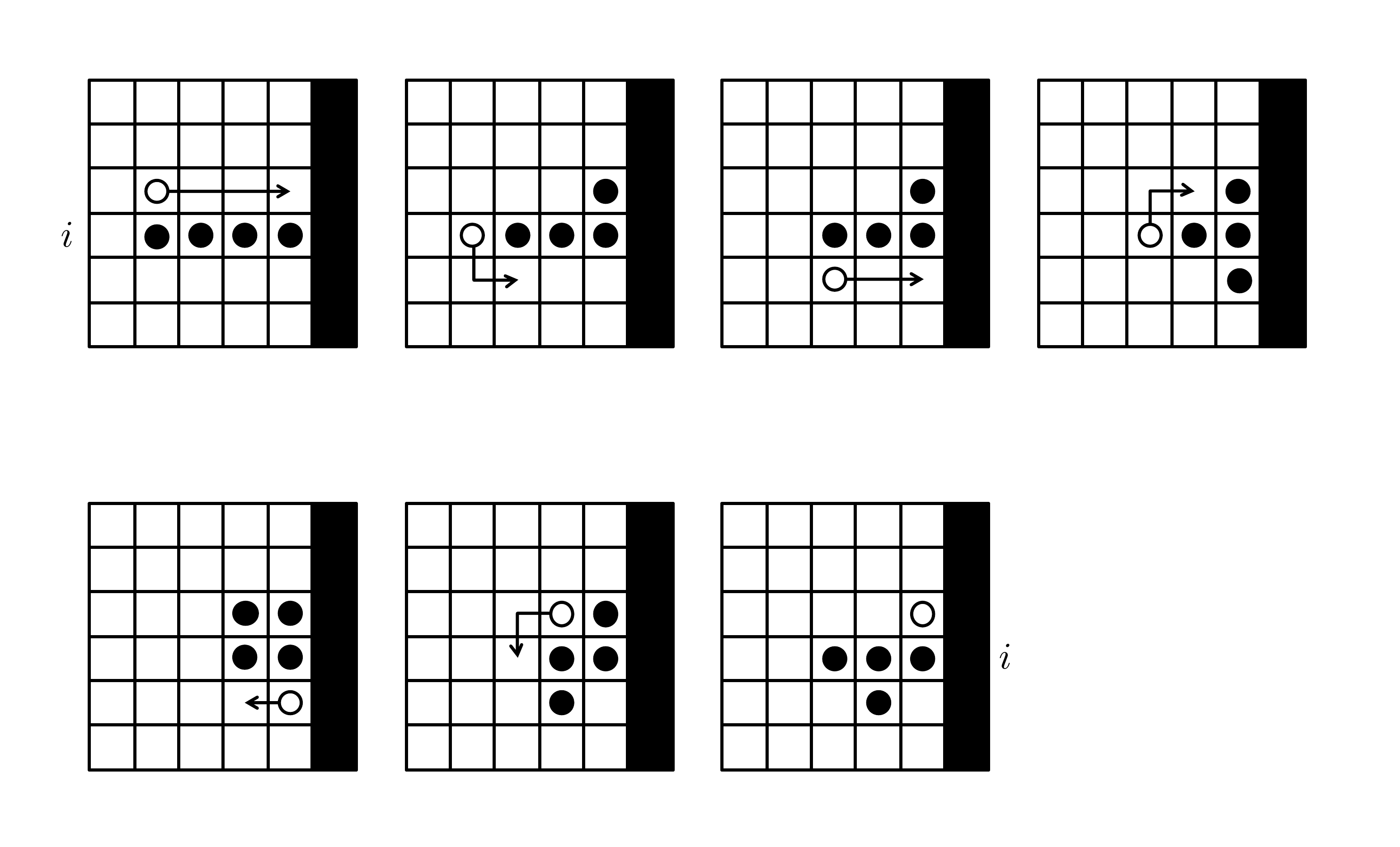}
   \caption{Turn on the east wall
   ($S^5_5 \to S^5_6 \to S^5_2 \to S^5_{11} \to S^5_9 \to S^5_{10} \to S^5_7$)}
   \label{fig:5-east-wall}
  \end{minipage}
  \\
  \begin{minipage}{0.95\hsize}
   \centering
   \includegraphics[height=5cm]{./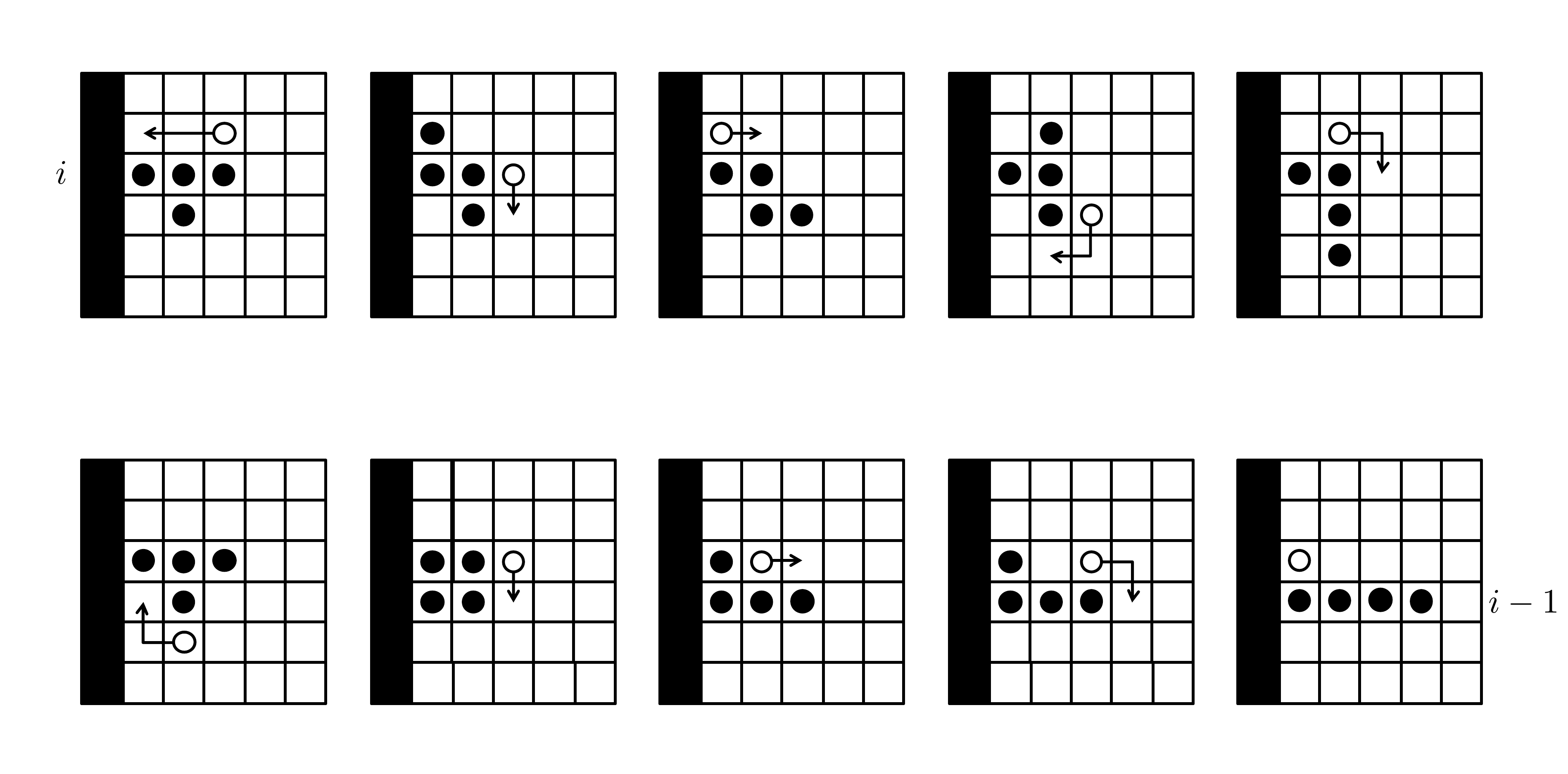}
   \caption{Turn on the west wall
   ($S^5_7 \to S^5_8 \to S^5_{12} \to S^5_7 \to S^5_{13} \to S^5_{11} \to
   S^5_9 \to S^5_{10} \to S^5_{14} \to S^5_5$)}
   \label{fig:5-west-wall}
  \end{minipage}

 \end{tabular}
\end{figure}

\begin{figure}[!htbp]
 \centering
 \begin{tabular}{c}
  \begin{minipage}{0.95\hsize}
   \centering
   \includegraphics[height=5cm]{./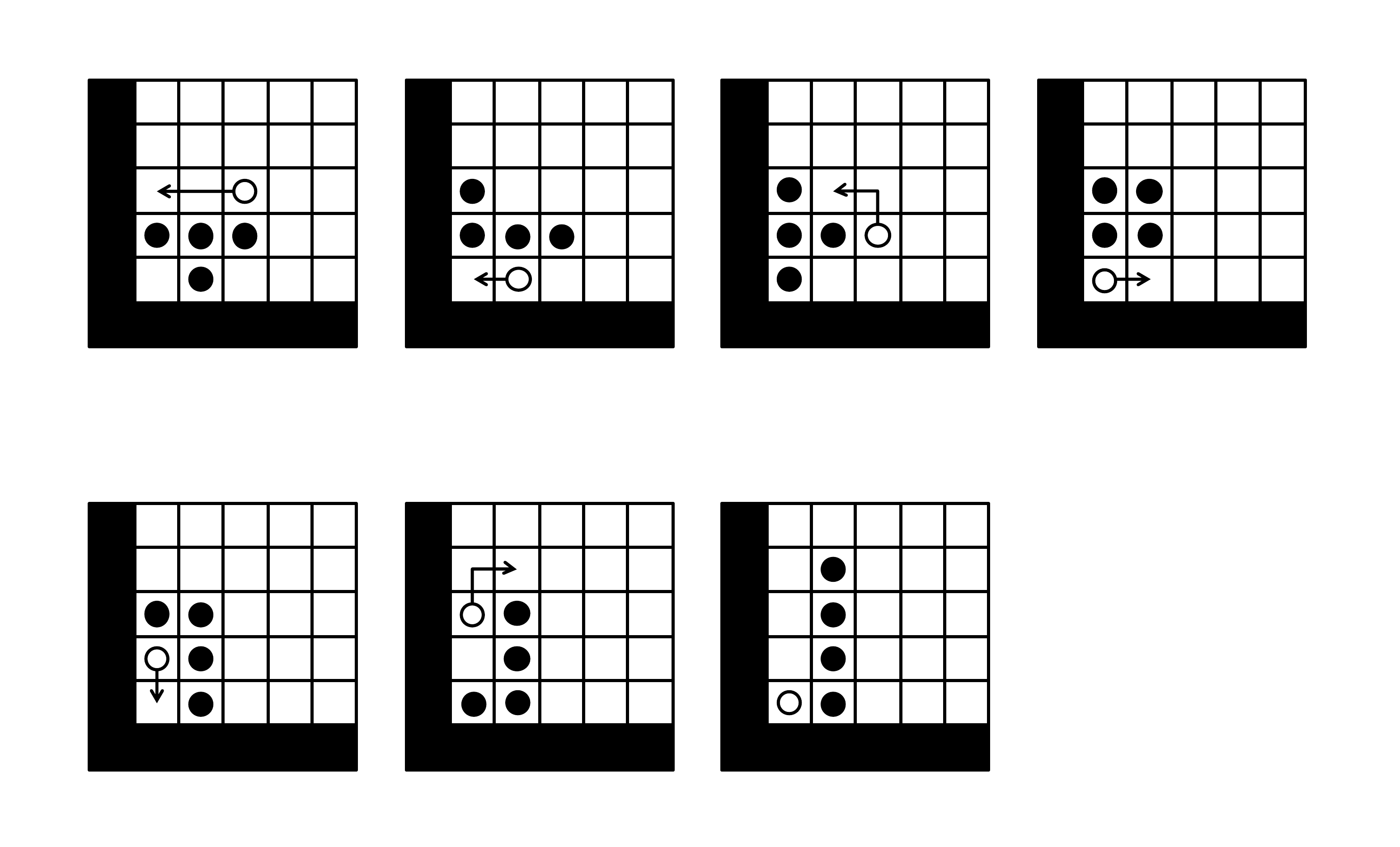}
   \caption{Turn on the southeast corner
   ($S^5_7 \to S^5_8 \to S^5_{11} \to S^5_{10} \to S^5_9 \to S^5_{14} \to
   S^5_5$)} 
   \label{fig:5-east-south-corner}
  \end{minipage}
  \\ 
  \begin{minipage}{0.95\hsize}
   \centering
   \includegraphics[height=6cm]{./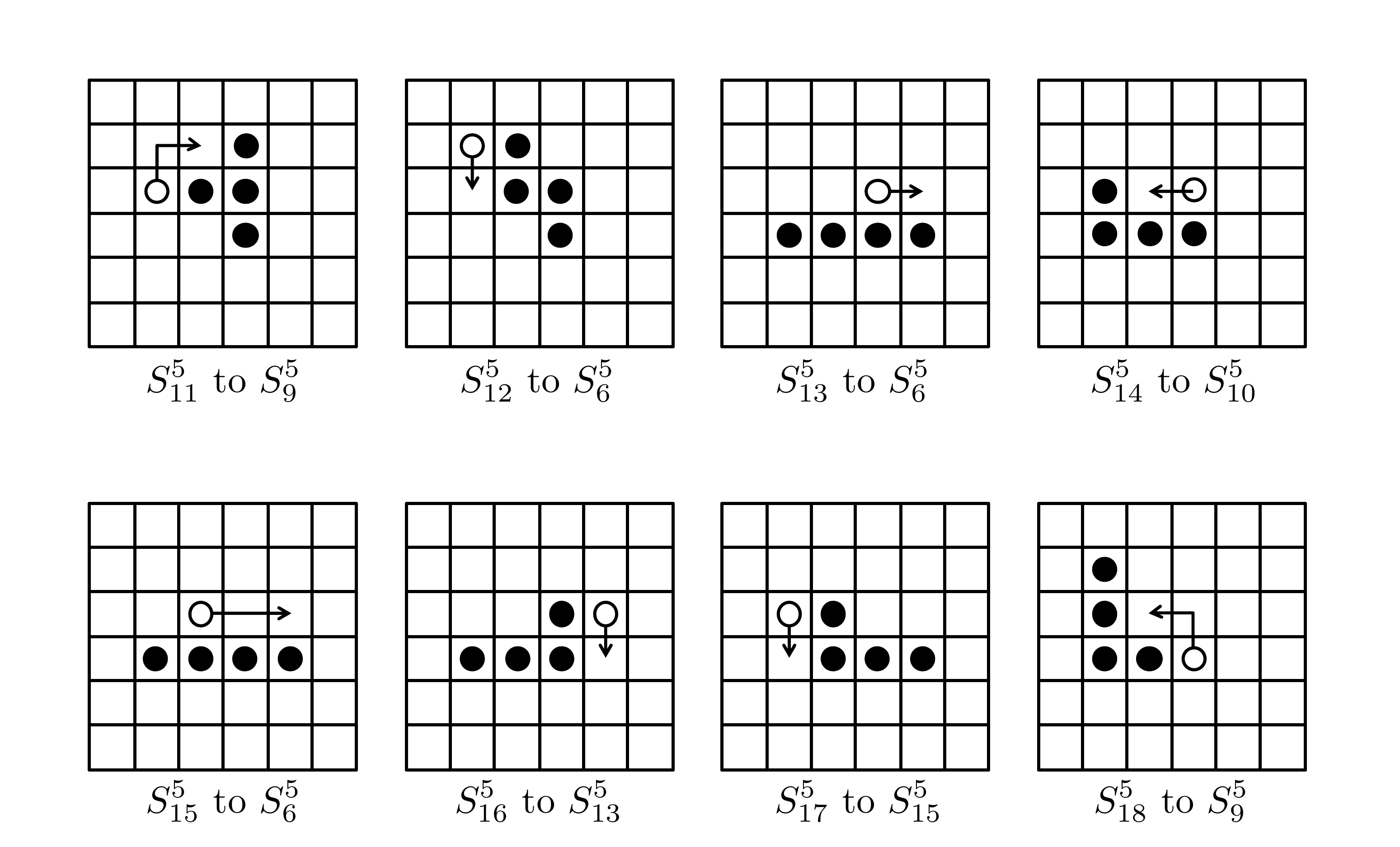}
   \caption{Exceptions without walls} 
   \label{fig:5-moving-exceptions}
  \end{minipage}
  \\ 
  \begin{minipage}{0.95\hsize}
   \centering
   \includegraphics[height=3cm]{./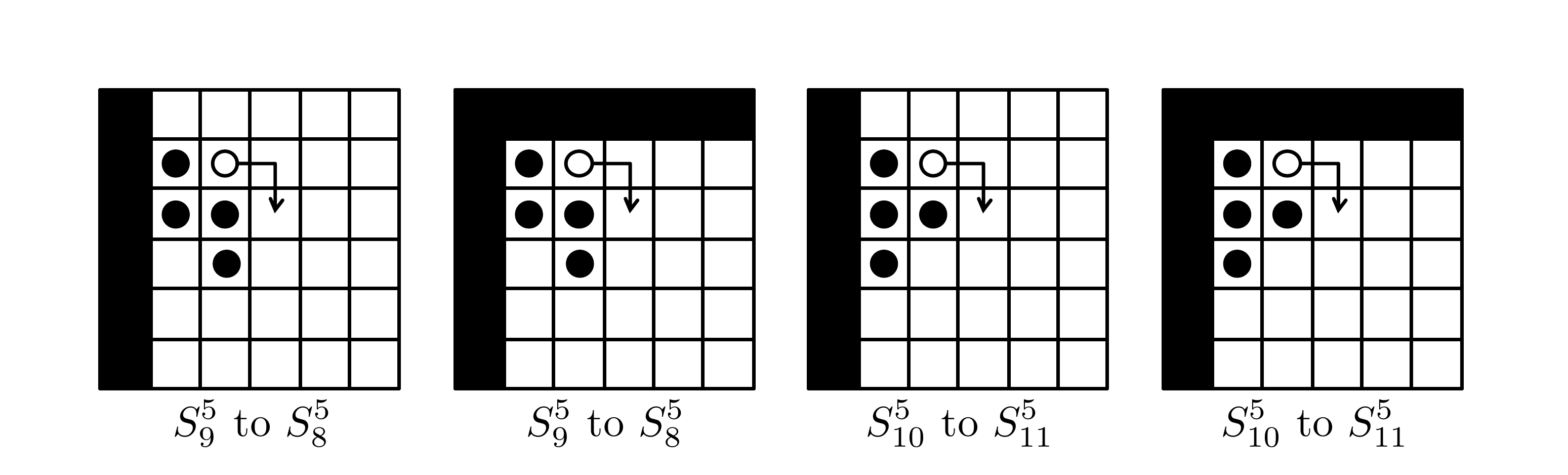}
   \caption{Exceptions with walls} 
   \label{fig:5-wall-exceptions}
  \end{minipage}

 \end{tabular}
\end{figure}

\noindent{\bf Moves along a row.~} Figure~\ref{fig:5-to-east}  
and Figure~\ref{fig:5-to-west} show the move to the east and
the move to the west, respectively.
By repeating one of the two moves, $R$ moves to one direction. 
In the beginning of the two moves,
its spine is the $i$th row an its frontier is
the $j$th column. 

\noindent{\bf Turns on the walls.~} Figure~\ref{fig:5-east-wall} 
and Figure~\ref{fig:5-west-wall} show a turn on the east wall and
a turn on the west wall, respectively. 
During the turns, each module can see the walls. 
The spine changes after a turn on the west wall, while 
it does not change after a turn on the east wall.

\noindent{\bf A turn on the southwest corner.~} When the spine of $R$
reaches the first row and it comes back to the west wall,
it turns the track by $\pi/2$ as shown in
Figure~\ref{fig:5-east-south-corner}.
Note that the cells of the $0$th row have been visited
by the modules under the spine 
when $R$ moves from the east wall to the west wall. 
The final state of the turn is $S^5_5$ and $R$ moves
along the $0$th column by the moves in Figure~\ref{fig:5-to-east}. 
Here, the spine is the first column, and the
$0$th column is visited by the modules over the spine. 

We finally add exceptional movements. 
Figure~\ref{fig:5-moving-exceptions} and \ref{fig:5-wall-exceptions}
show all states, for which no movement is defined yet. 
To be more precise, for states $S^5_1, S^5_2, \ldots, S^5_{10}$,
almost all states (including walls) are used in the proposed 
algorithm except $S^5_9$ and $S^5_{10}$ with walls. 
For states $S^5_{11}, \ldots, S^5_{18}$,
only six states with walls are used in the proposed algorithm.
Hence in the remaining states, $R$ changes its state to
one of $S^5_1, S^5_2, \ldots, S^5_{10}$ through at most two steps
as shown in Figure~\ref{fig:5-moving-exceptions} and
\ref{fig:5-wall-exceptions}. 

The reference point of $R$ visits all cells in each row and its 
progress is clear from the proposed algorithm.

By Lemma~\ref{lemma:local-nec} and \ref{lemma:local-suf}, 
we obtain Theorem~\ref{theorem:local}.

\subsection{Locomotion without global compass} 
\label{subsec:locomotion}

Theorem~\ref{theorem:local} excludes initial deadlock states 
of five modules and related initial states. 
We now discuss the minimum number of modules for self-stabilizing search. 
We start with a more fundamental problem called the \emph{locomotion problem} 
that requires the metamorphic robotic system in an infinite 2D square 
grid to keep on moving to one direction. 
Hence, the modules need to break their initial symmetry and 
agree on a moving direction. 

We formalize the symmetry of the modules in the 2D square grid. 
In a square grid, the symmetry of an initial state of the metamorphic 
robotic system is identified with three types of cyclic groups, 
$C_1$, $C_2$, and $C_4$, where 
$C_1$ consists of the identity element. 
The cyclic group $C_k$ consists of rotations by 
$2\pi/k, 4\pi/k, \ldots, 2\pi$ around a specified center. 
For example, the symmetry of Figure~\ref{fig:sym3} is $C_2$ 
because the state obtained by a rotation $\pi$ or $2\pi$ is identical 
to itself. 
The two modules at symmetric positions regarding the 
rotation center (i.e., the center cell) may have symmetric 
local coordinate systems and they may move symmetrically forever. 
The symmetry of Figure~\ref{fig:sym4} and is $C_4$, 
and the four modules cannot move because if one of them 
moves, the other three modules move simultaneously and symmetrically 
in the worst case and the backbone does not exist. 
The symmetry of Figure~\ref{fig:sym5} is also $C_4$; 
however, readers may expect that the 
module on the rotation center can break the symmetry 
by leaving the center. 
The following lemma presents a counter-intuitive fact; 
the module on the center cannot break the symmetry 
due to connectivity requirement. 

\begin{lem}
\label{lemma:loco-nec} 
When modules are not equipped with the global compass, 
the metamorphic robotic system can start locomotion from any initial state 
only if it consists of $(4m+3)$ modules for some positive integer $m$. 
\end{lem}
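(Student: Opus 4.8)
The plan is to prove the contrapositive: whenever $n\not\equiv 3\pmod 4$, and also in the single remaining case $n=3$, I exhibit a connected initial configuration $C_0$ from which no deterministic algorithm lets the metamorphic robotic system perform locomotion. The common device is an adversarial choice of local compasses that pins a symmetry center. First I would prove the following symmetry-preservation claim. Suppose $C_0$ is invariant under a cyclic group $G\in\{C_2,C_4\}$ centered at a grid point $p$, and the adversary gives the modules of each nontrivial $G$-orbit the $G$-rotated copies of one compass (a module sitting on $p$, if any, gets an arbitrary one). Since the modules are oblivious and synchronous, two modules of the same orbit with $G$-rotated compasses see identical local pictures of a $G$-symmetric configuration and make $G$-conjugate decisions; hence, as long as no module on $p$ moves, the set of moving modules is $G$-invariant and their destinations are $G$-images of one another, so $C_{t+1}$ is again $G$-symmetric about the \emph{same} $p$ --- or else the algorithm has no legal move at $C_t$ and deadlocks, which is equally good for us. Inductively every reachable $C_t$ is $G$-symmetric about $p$, so $(x,y)\in C_t\Leftrightarrow(2p_x-x,\,2p_y-y)\in C_t$ and therefore $\min_m x_m(t)+\max_m x_m(t)=2p_x$ for all $t$ (and likewise in $y$). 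Thus the system never lies entirely on one side of a fixed line, so it cannot keep moving in one direction, and locomotion fails.

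It then remains to supply, for each relevant $n$, a suitable $C_0$ and to control whatever module sits on $p$. For even $n$ (including the degenerate $n=2$, where no move is even possible), take the straight segment of $n$ modules: it is $C_2$-symmetric about the midpoint of its central edge, which is not a cell, so no module ever occupies $p$ and the claim applies verbatim. For $n=4k+1$ with $k\ge 1$, take the ``cross'' --- a central module on $p$ with four orthogonal arms of $k$ modules each --- which is $C_4$-symmetric. For $n=3$, take the three-module segment, which is $C_2$-symmetric about its middle cell; and $n=1$ is trivial since a lone module has no backbone. In the cross and the three-module segment a module does sit on $p$, so I must argue it is immobile in \emph{every} reachable configuration: if deleting it would disconnect the configuration, it cannot be a mover because the single-backbone condition would fail; and if deleting it leaves the configuration connected, then $G$-symmetry about $p$ together with a short local check forces the four cells side-adjacent to $p$ and all of the central module's rotation targets to be occupied, leaving it nowhere to go.

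The hard part will be exactly that last step: verifying that the dichotomy ``the central module is a cut vertex, or else is completely surrounded'' is preserved under every adversarially chosen $G$-symmetric move, for the cross with arbitrary arm length $k$ and for the three-module segment. The difficulty is that the arms are not rigid --- the cross with $k=2$, for instance, oscillates with the $3\times 3$ square --- so one cannot simply say ``nothing moves'', and one must keep enough of the structure around $p$ under control to exclude every rotation and every $j$-sliding of the central module while the outer modules reshuffle. I expect this to reduce to a finite case analysis on the occupancy of the eight cells adjacent to $p$, using the observation that any module neighboring $p$ whose deletion would cut off a piece of an arm is itself frozen by the single-backbone condition, so the immediate neighborhood of $p$ stays saturated. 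With that invariant established, the symmetry/centroid argument of the first paragraph closes the impossibility, which is the content of the lemma; combined with the sufficiency for seven modules shown in the sequel, it fixes the minimum at seven.
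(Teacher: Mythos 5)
Your proposal follows essentially the same route as the paper: adversarially assign $G$-rotated local compasses to the modules of each $G$-orbit so that every reachable configuration stays $G$-symmetric about a fixed center $p$, and then argue that a module sitting on $p$ (if any) can never move. The paper's witnesses are two parallel lines of length $2m$ for $n=4m$ (and the same idea for $n=4m+2$) and the cross for $n=4m+1$; your straight segment and cross are interchangeable with these, and your explicit treatment of $n=1$ and $n=3$ actually makes the argument more self-contained than the paper's, which covers $n\le 4$ only via the earlier Lemma~\ref{lemma:local-nec}. Two further remarks. First, your way of closing the even case --- the pinned center $p$ forces $\min_m x_m(t)+\max_m x_m(t)$ to stay constant, so no net translation is possible --- is cleaner than the paper's, which instead argues that symmetric partners eventually drift apart and disconnect. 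Second, the step you flag as ``the hard part'' is in fact immediate and needs no case analysis on the eight cells around $p$: in any \emph{connected} $C_4$-symmetric configuration with a module on the center cell $p$ and $n>1$, the four cells side-adjacent to $p$ form a single $C_4$-orbit, hence are all occupied or all empty, and ``all empty'' would isolate the central module in the side-adjacency graph; so they are all occupied at every step, which simultaneously blocks every sliding of the central module (its destination is occupied) and every rotation (the side-adjacent cell it must sweep through is occupied). The same symmetry-plus-connectivity observation shows that in the three-module $C_2$ case the configuration is always a straight segment through $p$ whose middle module is a cut vertex and hence frozen by the single-backbone condition. With that one-line invariant in place your argument is complete; note that the paper itself verifies the center's immobility only in the initial cross, so writing this out closes a small gap in the published proof as well.
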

\begin{proof}
We present initial states of modules from which 
the metamorphic robotic system cannot start locomotion. 
We have the following three cases. 

\noindent{\bf Case 1.~}$4m$ modules. 
 Consider an initial state of $4m$ modules in which 
 they form two parallel lines, i.e., $C_0 = \{
 c_{i,j}, c_{i,j+1}, \ldots, c_{i,j+2m-1}, 
 c_{i+1,j}, c_{i+1,j+1}, \ldots, c_{i+1,j+2m-1} \}$. 
 The rotational symmetry of $C_0$ is $C_2$ with the 
 rotational center $(i+1, j+m)$. 
 In this case, every pair of modules 
 in cells $c_{i,j+k}$ and $c_{i+1, j+2m-1-k}$ 
 (or $c_{i+1, k+k}$ and $c_{i, j+2m-1-k}$)
 for $k=0,1,2,\ldots, m$ 
 perform symmetric movement in the worst case. 
 Assume that there exists an algorithm $A$ that makes 
 the $4m$ modules start locomotion from any initial state. 
 If the module in $c_{i, j+k}$ moves to 
 $c_{i+x, j+k+y}$, then the module in $c_{i,j+2m-1-k}$ moves to 
 cell $c_{i+1-x, j+2m-1-k-y}$. 
 Eventually, $x$ and $y$ becomes large enough to make 
 the modules disconnected. 
 Hence, there is no locomotion algorithm $A$ for this 
 initial state. 
 
\noindent{\bf Case 2.~}$(4m+1)$ modules. 
 Consider an initial state $C_0$ of $(4m+1)$ modules 
 where they form a large plus sign with center cell $c_{i,j}$. 
 See Figure~\ref{fig:sym5} as an example for $m=1$. 
 The rotational symmetry of $C_0$ is $C_4$ with the rotation center 
 $(i+1/2, j+1/2)$. 
 In the same discussion as the previous case, 
 the modules forming the four long lines move symmetrically in the worst case. 
 The only possibility to break their symmetry is to 
 move the module in $c_{i,j}$. 
 When the module performs a sliding, 
 one of the side-adjacent cells $c_{i+1, j}, c_{i, j+1}, c_{i-1, j}, c_{i, j-1}$ 
 must be empty. 
 Because of the symmetry, 
 if the module in one of the four cells moves, 
 the other three modules move simultaneously, 
 and the metamorphic robotic system cannot maintain its 
 connectivity. Hence, the module in $c_{i,j}$ cannot 
 perform a sliding. 
 In the same way, the module in $c_{i,j}$ cannot perform 
 a rotation because any rotation requires that one of the four 
 side-adjacent cells is empty. 
 Consequently, the module in cell $c_{i,j}$ cannot perform any 
 movement and the modules cannot perform locomotion. 
 
\noindent{\bf Case 3.~}$(4m+2)$ modules. 
 By the same discussion as the first case, 
 the modules cannot perform locomotion. 

Consequently, $(4m+3)$ modules are necessary condition 
for the metamorphic robotic system to perform locomotion 
from an arbitrary initial state. 
\qed 
\end{proof}

By Lemma~\ref{lemma:loco-nec}, 
the minimum number of modules that may perform locomotion 
from an arbitrary initial state is seven. 
Actually, there is an algorithm that transforms 
any symmetric initial state of seven modules 
to the leftmost state in Figure~\ref{figure:7-loco}, 
and the center module can break the symmetry 
by $1$-sliding to east or west.\footnote{Consequently, 
it is impossible to break the initial symmetry of $C_4$, 
while it is possible for $C_2$.}
Then, for example, the modules can perform a locomotion 
as shown in Figure~\ref{figure:7-loco}. 

\begin{figure}[t]
\centering
 \includegraphics[height=2cm]{./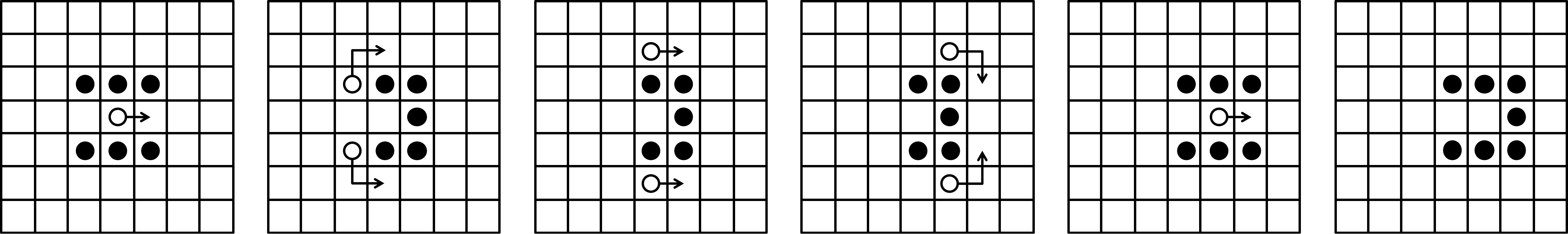}
 \caption{Locomotion of seven modules} 
 \label{figure:7-loco}
 \end{figure}

\begin{lem} 
\label{lemma:locom-suf}
Seven modules not equipped with the global compass 
are necessary and sufficient for the metamorphic robotic system 
to perform locomotion from an arbitrary initial state. 
\end{lem}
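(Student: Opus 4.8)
Necessity is immediate from Lemma~\ref{lemma:loco-nec}: locomotion from an arbitrary initial state requires the number of modules to be $4m+3$ for some positive integer $m$, and the smallest such value is seven. For sufficiency, the plan is to give an algorithm for seven modules and argue that every execution it admits eventually moves forever in one direction.

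The starting point is a symmetry census of the connected seven-module states, whose rotational symmetry is $C_1$, $C_2$, or $C_4$. Under $C_4$ every occupied cell lies in an orbit of size four except possibly one cell on the rotation center, so a $C_4$-symmetric state has size $\equiv 0$ or $1 \pmod 4$; since $7 \equiv 3 \pmod 4$, no seven-module state is $C_4$-symmetric. A $C_2$-symmetric state of odd size must have its rotation center at a cell (a vertex or an edge midpoint would force an even number of modules), and that cell is occupied while the other six modules split into three symmetric pairs around it. Hence an arbitrary initial state is either asymmetric ($C_1$) or $C_2$-symmetric with an occupied center, and the proof reduces to two tasks: (i) drive any $C_2$-symmetric state to an asymmetric state; and (ii) drive any asymmetric state to a fixed canonical ``locomotion shape'' and then repeat a forward-advancing cycle.

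For task~(i), in a $C_2$-symmetric state an adversary may mirror every symmetric pair through the center, so every reachable state stays $C_2$-symmetric until the unique center module is involved in a move; the center module is therefore the only lever for symmetry breaking. The plan is to supply a family of \emph{mirror-safe} moves---moves that remain legal (connectivity, single backbone, non-overlapping trajectories) when performed simultaneously with their mirror image---that carry an arbitrary $C_2$-symmetric seven-module state, through $C_2$-symmetric intermediates, to the standard shape at the left of Figure~\ref{figure:7-loco}, together with a progress measure that strictly decreases toward that shape whatever mirrored execution the adversary selects. From the standard shape, a coordinated move led by the center module (of $1$-sliding type, the center going east or west at the adversary's choice) produces an asymmetric state; here one must note that the center module cannot move alone, because in every seven-module $C_2$-symmetric state the center is a cut vertex, so the move is performed together with a symmetric pair of companion modules chosen to keep the non-moving modules connected, and either mirrored outcome breaks the symmetry.

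For task~(ii), an asymmetric state plus the common handedness lets the modules agree on a canonical reference and total order, so a deterministic procedure is well defined; the plan is to fold the shape into the canonical locomotion shape of Figure~\ref{figure:7-loco} by a chain-formation argument in the spirit of~\cite{DSY04a}, and then repeat the locomotion cycle of that figure, which advances the system one cell in a fixed direction per period. To keep the whole algorithm robust, the folding phase re-invokes task~(i) whenever it passes through a $C_2$-symmetric state, and a single potential function bounds the transient phase; since every state of the locomotion cycle is asymmetric and the cycle uses a pairwise-disjoint set of states, the modules always agree on the next mover and the system never stalls. The main obstacle I anticipate is task~(i): enumerating the $C_2$-symmetric seven-module states and exhibiting mirror-safe moves with a provable progress measure toward the standard shape while maintaining a connected backbone under adversarial mirroring, and in particular engineering the final connectivity-preserving symmetry-breaking move around a center module that is necessarily a cut vertex; task~(ii) is comparatively routine, reusing standard chain-formation and locomotion constructions. \qed
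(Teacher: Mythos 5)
Your overall route coincides with the paper's: necessity is read off from Lemma~\ref{lemma:loco-nec} (seven is the least integer of the form $4m+3$), and sufficiency is argued by normalizing any symmetric initial state to the canonical shape of Figure~\ref{figure:7-loco}, breaking the residual symmetry through the unique center module, and then using the common handedness to totally order the asymmetric configuration and repeat a forward-advancing cycle. Your symmetry census is correct and sharpens what the paper leaves implicit: no seven-module state is $C_4$-symmetric since $7\equiv 3 \pmod 4$, and a $C_2$-symmetric state of odd size has its rotation center at an occupied cell. The parts you announce only as ``plans'' (the family of mirror-safe moves and the progress measure for the normalization phase) are exactly the parts the paper also leaves to a figure and an existence assertion, so on those points you are at the same level of detail as the source, not on a different route.

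The one step you commit to concretely is the symmetry-breaking move, and as written it fails. You rightly observe that the center module is a cut vertex of every $C_2$-symmetric seven-module state: the six remaining cells split into three antipodal pairs, the two cells of a pair are at even $L_1$-distance and hence never side-adjacent, every edge of the induced graph pairs with its distinct antipodal image so the edge count is even, and a short case analysis (a connected graph on six cells needs at least six edges; six edges force a unique $4$- or $6$-cycle, which would have to be a $\sigma$-invariant $2\times 2$ or $2\times 3$ block centered at a non-cell point; seven edges is odd; more is impossible) shows the six cells are never connected. But your fix---moving the center together with one symmetric pair of companions---does not restore the single-backbone condition: the backbone $B_t$ is then two antipodal pairs, and the same counting argument (an even number of edges, at least four, forces a $2\times 2$ block centered at a lattice corner, which cannot be the cell-center of rotation) shows that two antipodal pairs are never connected. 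Moving the center with two pairs leaves a single pair, also disconnected; moving all seven leaves no backbone at all. So under adversarial mirrored compasses every move set containing the center yields a disconnected backbone, and the claim that the canonical state can be escaped is not justified by your argument. This is the crux of the lemma: the proof must exhibit the exact mover set for the first step of Figure~\ref{figure:7-loco} and verify the backbone condition for it, or else argue that the adversary cannot keep all three pairs mirrored at that configuration; neither your proposal nor a generic ``coordinated move'' supplies this.
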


After the first movement of Figure~\ref{figure:7-loco}, 
the seven modules do not have any symmetry 
and they can agree on a total ordering 
among themselves because they agree on the clockwise direction. 
Thus, from this state, each module can perform 
different movements. 
We can easily extend the proposed search algorithm in 
Section~\ref{subsec:local-suf} 
for seven modules and we have the following result. 
The necessity is by Lemma~\ref{lemma:local-nec} and Lemma~\ref{lemma:loco-nec}.

\begin{corl}
 Seven modules not equipped with the global compass 
 are necessary and sufficient for
 the metamorphic robotic system to find a target 
 in any given field from any initial configuration.  
\end{corl}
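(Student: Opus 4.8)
The plan is to prove the two directions separately, leaning entirely on the lemmas already established. For necessity, I would show that for every $n<7$ there is an initial configuration of $n$ modules from which search is impossible. For $n\le 4$ this is exactly Lemma~\ref{lemma:local-nec}, which rules out search even from the most favourable initial state. For $n=5$ and $n=6$ I would reuse the deadlock configurations built in the proof of Lemma~\ref{lemma:loco-nec}: the plus-sign state of five modules (Case~2 with $m=1$, i.e.\ Figure~\ref{fig:sym5}) and the pair of parallel length-three lines of six modules (Case~3 with $m=1$). In each of these states, under the worst-case choice of local compasses, every module that could move has a symmetric partner that moves simultaneously, so connectivity or the single-backbone condition is violated by any move; hence no module ever moves, and since the field is large enough to contain a cell outside every module's visibility range, the target can be placed there and is never found. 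This settles the ``necessary'' half.

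For sufficiency I would exhibit an algorithm for seven modules that works from an arbitrary initial configuration, in two stages. \textbf{Symmetry breaking.} By the discussion preceding Lemma~\ref{lemma:locom-suf}, every initial state of seven modules has rotational symmetry $C_1$ or $C_2$; the group $C_4$ is impossible because neither $7$ nor $6$ is a multiple of $4$, so $C_4$ symmetry cannot be realised by seven modules with or without an occupied centre. If the initial state is $C_2$, the sub-routine of Lemma~\ref{lemma:locom-suf} transforms it into the symmetric line-with-centre state of Figure~\ref{figure:7-loco} and then lets the centre module $1$-slide, yielding a $C_1$ (asymmetric) configuration while maintaining connectivity and a single backbone throughout. \textbf{Search.} Once the configuration is asymmetric, the seven modules share the handedness and therefore agree on a total ordering of themselves; from here I would rotate and slide the system into a fixed canonical asymmetric shape and then run the seven-module analogue of the five-module exploration algorithm of Section~\ref{subsec:local-suf}, carrying the two extra modules as part of the backbone. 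Because the configuration stays asymmetric and the move/state tables are designed so that no two distinct basic moves share a state (exactly as in Section~\ref{subsec:local-suf}), the modules always agree on which move is in progress and which module acts next, so the reference point sweeps every cell of the field and the target is eventually reached, after which all modules detect it and halt.

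The step I expect to be the main obstacle is the gluing of the two stages: one must verify that the symmetry-breaking sub-routine of Lemma~\ref{lemma:locom-suf} terminates in an asymmetric configuration from \emph{every} $C_2$ initial state (not merely the illustrative two-parallel-lines example), that it respects connectivity and the single-backbone rule at every step, and --- most delicately --- that its intermediate states are disjoint from, or at least unambiguously distinguishable from, the states used by the search phase, so that the composed algorithm never admits two applicable moves in a single state. Carrying this out, together with spelling out the exception handling for the seven-module exploration when the system sits on a wall or in a corner (the analogue of Figures~\ref{fig:5-moving-exceptions} and~\ref{fig:5-wall-exceptions}), is the routine but lengthy part; the necessity direction and the high-level structure of sufficiency follow immediately from Lemmas~\ref{lemma:local-nec}, \ref{lemma:loco-nec}, and~\ref{lemma:locom-suf}.
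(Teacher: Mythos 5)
Your proposal follows the paper's own route exactly: necessity from Lemma~\ref{lemma:local-nec} for $n\le 4$ together with the deadlock states of Lemma~\ref{lemma:loco-nec} for $n=5,6$, and sufficiency by first reducing any $C_2$-symmetric seven-module state to the state of Figure~\ref{figure:7-loco}, breaking symmetry with the centre module's $1$-sliding (Lemma~\ref{lemma:locom-suf}), and then running a seven-module extension of the exploration algorithm of Section~\ref{subsec:local-suf}. The difficulties you flag (termination of the symmetry-breaking routine from every $C_2$ state, and keeping its states distinguishable from those of the search phase) are exactly what the paper compresses into ``we can easily extend.''

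One sub-claim in your necessity argument is too strong. For the six-module $2\times 3$ block you assert that any move violates connectivity or the single-backbone condition, so no module ever moves. That is false: a corner module can rotate clockwise around its neighbour in the same column to the cell just outside the block while its $C_2$-partner performs the symmetric rotation; the remaining four modules form a connected backbone, the two trajectories are disjoint, and the resulting configuration is connected, so this simultaneous move is perfectly legal. The argument that actually works (and is the one implicit in Cases~1 and~3 of Lemma~\ref{lemma:loco-nec}) is that under worst-case symmetric local compasses every legal joint move maps a $C_2$-symmetric configuration to a $C_2$-symmetric configuration about the \emph{same} rotation centre; since a connected six-module system lies within bounded distance of its symmetry centre, it can never reach a target placed far from that centre. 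The five-module plus sign, by contrast, really is fully frozen as you claim, since there the centre module is blocked and any arm movement forces an interfering or backbone-destroying quadruple of simultaneous moves.
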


\section{Conclusion and future work}
\label{sec:concl}

We considered three dynamic tasks for the metamorphic robotic system, 
i.e., search, exploration, and locomotion. 
We demonstrated the 
effect of the global compass on the minimum number of modules 
for search in an unknown field.  
We first demonstrated that when the modules are equipped with 
the global compass, 
three modules can find a target from an arbitrary initial sate and 
less than three modules cannot perform search. 
Then, we demonstrated that when the modules are not equipped with the 
global compass, 
five modules perform search from limited initial states 
and less than five modules cannot perform search. 
Finally, we discussed locomotion and demonstrated 
that when the modules are not equipped with the global compass, 
seven modules perform locomotion from any initial state. 
The proposed search algorithms make the metamorphic robotic system 
perform exploration of a field until it finds the target. 
When there is no target in the field, 
the proposed algorithm makes the metamorphic robotic system 
perform perpetual exploration, i.e., exploration without stop. 

The proposed search algorithms also illustrate 
the effect of global compass on the search time. 
Both search algorithms exhibit the worst search time 
when the metamorphic robotic system starts from a row 
just below the target. 
The search algorithm in Section~\ref{sec:global} makes 
the metamorphic robotic system consisting of three modules 
equipped with the global compass 
sweep all rows to find the target. 
On the other hand, 
the search algorithm in Section~\ref{sec:local} makes 
the metamorphic robotic system consisting of five modules 
not equipped with the global compass 
sweep all rows and columns. 
Intuitively, the latter algorithm imposes two visits 
of each cell in the worst case, 
while the former algorithm imposes a single visit of each cell. 
We believe that additional modules can reduce 
the second visits, that is, 
five modules are not enough for efficient exploration track. 
It remains to be investigated how many modules are necessary and sufficient 
for efficient search when the modules are not equipped with 
the global compass. 
Of course, the number of steps in a search 
can be improved with faster moves and turns. 

Another important directions is search in other fields,
for example, a convex and non-convex field in the 2D square grid, 
a finite 3D square grid, 
torus, graphs, sphere, and generalization to continuous space. 

%
%
%
%
%
%

\bibliographystyle{plain}
\bibliography{papers} 

\end{document}